\theoremstyle{definition} 
\newtheorem {theorem} {Theorem}
\newtheorem {lemma} {Lemma}
\newcommand{\fullversion}[2]{\ifthenelse{\boolean{fullversionflag}}{{#1}}{{#2}}}
\newcommand{\heading}[1]{\text{ }\newline\textbf{{#1:}}}
\newcommand{\kb}[1]{\left[#1\right]}
\newcommand{\al}{\mathcal{A}}
\newcommand{\trd}[1]{\left|\left| #1 \right| \right|}
\newcommand{\st}{\text{ } : \text{ }}
\newcommand{\Hmin}{H_\infty}
\newcommand{\leakEC}{\lambda_{EC}}
\newcommand{\stn}[1]{\texttt{STN}_{#1}}
\newcommand{\protPM}{\prod^{\texttt{PM}}}
\newcommand{\protEB}{\prod^{\texttt{EB}}}
\newcommand{\protEBM}[1]{\prod^{\texttt{EB}}_{#1}}
\newcommand{\samp}[1]{\mathcal{S}_{{\texttt{#1}}}}
\newcommand{\px}{p_X}
\newcommand{\rej}{\texttt{rej}}
\newcommand{\minN}{{\widetilde{N}}}
\newcommand{\minNN}{{N_0}}
\newcommand{\minZ}{{n_0}}
\newcommand{\minX}{{m_0}}
\title{Finite Key Security of Simplified Trusted Node Networks}
\author{
\IEEEauthorblockN{Walter O. Krawec, Bing Wang, Ryan Brown}

\IEEEauthorblockA{School of Computing, University of Connecticut, Storrs, CT, USA}

\texttt{walter.krawec@uconn.edu}
}
\begin{document}
\maketitle
\begin{abstract}
  Simplified trusted nodes (STNs) are a form of trusted node for quantum key distribution (QKD) networks which do not require running a full QKD stack every instance (i.e., they do not need to run error correction and privacy amplification each session). Such systems hold the advantage that they may be implemented with weaker computational abilities, than regular TNs, while still keeping up with key generation rate demands. The downside is that noise tolerance is lower. However, to get a better understanding of their suitability in various scenarios, one requires practical, finite-key security bounds for STN networks. So far, only theoretical asymptotic bounds are known. In this work we derive a new proof of security for STN chains in the finite key setting. We also derive a novel cost function allowing us to evaluate when STNs would be beneficial from a computational cost perspective, compared with regular TN networks.
\end{abstract}

\section{Introduction}

Quantum key distribution (QKD) is a powerful quantum cryptographic mechanism allowing for the establishment of shared secret keys, secure against computationally unbounded adversaries.  This is unlike classical key distribution, where computational assumptions are always required to prove security.  In general, QKD systems work by having Alice stream qubits to Bob, while Bob measures these qubits.  From this, classical communication is performed to distill a final secret key.  For more information on general QKD, the reader is referred to \cite{QKD-survey,QKD-survey2,QKD-survey3}.

One of the main limitations of QKD is distance.  In general, the secret key rate degrades exponentially with distance between Alice and Bob, due to the increased chance of photon loss~\cite{Svelto10:Principles,Kaushal17:optical}.  
%
Quantum Networks can mitigate this issue.  Such networks consist of \emph{quantum repeaters} \cite{azuma2023quantum,briegel1998quantum,sangouard2011quantum} and/or \emph{trusted nodes} (TNs).  The former are still difficult to implement in practice, however they will lead to a general \emph{Quantum Internet} \cite{kimble2008quantum,caleffi2018quantum,wehner2018quantum}.  However,  most QKD networks today consist of trusted nodes, including most metro-area QKD networks (e.g., \cite{peev2009secoqc,chen2010metropolitan,zhang2018large,sasaki2011field}).  Of course, hybrid networks are also studied \cite{tysowski2018engineering,amer2020efficient}.

Trusted Nodes are QKD nodes, placed in a chain between Alice and Bob.  Each TN performs standard QKD with its neighbors, establishing pair-wise secret keys.  Finally, for Alice and Bob to establish a shared secret key, each TN will broadcast the parity of the secret keys it holds.  Bob will take all these parity announcements and XOR with his version of the secret key.  At this point, Alice and Bob will hold a correlated key that is secure against third-party adversaries.

One problem with TNs, from a computational standpoint, is that each TN must be equipped with a full QKD stack.  That is, whenever Alice and Bob wish to establish a secret key, each TN in the chain must perform (1) Error Correction (EC) and (2) Privacy Amplification (PA) twice (once with each neighbor).  Both processes can be 
 computationally intensive,
especially error correction, and so this may be a bottleneck in practical large-scale QKD network implementations.  Thus, to ensure high-speed key generation between Alice and Bob, each TN must be equipped with the computational resources needed to perform high-speed EC and PA.  This can increase the cost of the overall chain and places a bottleneck on the ``slowest'' TN in a chain.

One way to overcome this challenge are \emph{Simplified Trusted Nodes} (STNs), introduced in \cite{STN}.  Here, an STN does not need to perform EC and PA every time Alice and Bob want to establish a secret key.  Instead, each STN simply performs state preparations and measurements (e.g., BB84 \cite{QKD-BB84} style states and measurements), and broadcasts the parity of their raw measurement results (as opposed to the parity of the actual secret key after EC and PA are run as in a TN architecture).  This can be done quickly with minimal computational power, thus placing the overall bottleneck on Alice and Bob only.  Each STN will not be required to perform the time and computationally consuming tasks of EC and PA every single time they are used to establish a key.  Instead, they will be immediately free to perform another QKD session with the same, or alternative, users.  STNs may also have an advantage over TNs in security as pointed out in \cite{huang2022stream}; namely, even if an STN is later compromised, it only stores raw key information - to fully recover the secret key, an adversary needs both the raw key data \emph{and} the PA data sent between Alice and Bob.  However, while advantageous from a computational perspective (and potential cost and security perspective), STN chains have lower noise tolerances as shown, asymptotically, in \cite{STN}.  See Figure \ref{fig:STN-chain}.

\begin{figure}
    \centering
    \includegraphics[width=.8\linewidth]{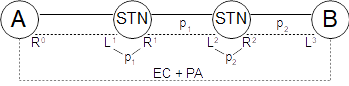}
    \caption{Showing a basic STN chain with two STNs.  Solid line: Quantum channel; Dashed line: Authenticated classical channel. Each neighboring pair will perform the quantum communication portion of BB84, establishing raw keys $R^i$ (with the right-neighbor) and $L^i$ (with the left-neighbor).  Ideally, if there is no noise, $R^i = L^{i+1}$.  Each STN will broadcast the parity of its raw keys, namely $p^i = L^i\oplus R^i$.  Bob will then take his final raw key to be the XOR these parity strings with his $L^3$ measurements; his raw key  should, in the absence of noise, now match $R^0$.  Alice and Bob then run error correction (EC) and privacy amplification (PA); STNs do not need to be involved in that final, computationally intensive, task and are instead immediately free to perform QKD with the same, or other, end-users.  Note that STNs do need to occasionally perform local QKD with their neighbors to refresh their authenticated key-pool - this is an issue we address later when comparing to a regular TN network.  Note that a regular TN network requires each neighboring pair of nodes to perform EC and PA (thus each trusted node will perform EC and PA twice) before a key is established between end users.}
    \label{fig:STN-chain}
  \end{figure}

All prior work in STN security research \cite{STN,huang2022stream,guerrini2018secure,burniston2023pre}, to our knowledge, has been restricted to asymptotic analyses.  To get a better understanding of the trade-offs when using STN networks versus TN networks, we require a finite-key security proof: that is, a bound on the number of secret key bits that can be established when sending $N$ qubits through the network (as opposed to prior work which assumed $N\rightarrow \infty$).  Such a finite key proof poses significant challenges: first we need a bound on quantum min entropy \cite{renner2008security,tomamichel2012tight}, as opposed to bounding only the von Neumann entropy \cite{QKD-Winter-Keyrate}.  Second, this bound must take into account finite key effects, along with the parity broadcasts sent by STNs.  Third, a proof must take into account that an adversary can attack all channels together, potentially gaining more information than a single channel attack. With a regular TN network, QKD is performed individually on each channel (in particular error correction and privacy amplification is run for each link), allowing one to focus on attacks on a single channel only.  Taken together, this makes a finite-key security proof a challenging problem.

In this work, we derive, for the first time to our knowledge, a finite-key security proof for an STN chain.  Our proof is general, in that it can support any number of STNs, and it assumes Eve performs any arbitrary, general, attack.  To prove this, we derive a bound on the quantum min entropy of the protocol using the quantum sampling framework of Bouman and Fehr \cite{bouman2010sampling} along with proof techniques from sampling based entropic uncertainty relations \cite{yao2022quantum} as a foundation.  However, our proof demonstrates several new techniques that may be beneficial to other researchers investigating chains of communicating nodes.

Once a finite-key rate is derived, we can begin to investigate the potential trade-offs between using STN networks and regular TN networks.  In particular, an STN chain does not need to perform EC and PA every time Alice and Bob want to establish a key (unlike TN networks).  However, they do need to perform EC and PA \emph{sometimes} in order to replenish their local key pools needed for authenticated communication channels.  Exactly how often they need to do this will depend on a variety of factors.  Considering this, is there really a cost benefit to using STNs?  Prior work is only asymptotic and could not be used to accurately answer this question in more practical finite key settings.

As a second contribution, we derive a novel cost function for STN and TN networks which takes the computational cost of EC and PA into account.  We evaluate this cost function, using our finite key bound, to provide evidence that shows STNs may be more cost effective in certain scenarios, and less cost effective in others.  In particular, in low-noise scenarios, STNs can be very cost effective; in high noise scenarios, TNs may be a preferred choice.  We comment that a similar observation was made for satellite communication using a single STN in \cite{guerrini2018secure}, though, there, communication cost was used as a metric and, furthermore, only the asymptotic scenario was considered.  Our equations will allow researchers to experiment with various parameters, including block sizes, sampling rates, and various failure parameters, to determine whether STNs are a more viable option than a standard TN network.  Indeed, a regular TN network may be more costly to implement as more expensive computational resources would be required to ``keep up'' with Alice and Bob's key generation demands.  While STNs will also need to occasionally preform EC and PA, it may be done ``in the background'' and only occasionally, with slower hardware without slowing down end-users.

\heading{Notation}
We now introduce some notation that we will use throughout this paper.  First, let $q \in \{0,1\}^N$, then for any $i = 1, \cdots, N$, we write $q_i$ to mean the $i$'th character of $q$. Let $t \subset\{1, \cdots, N\}$, then we write $q_t$ to mean the substring of $q$ indexed by $t$, namely $q = q_{t_1}q_{t_2}\cdots$.  We write $q_{-t}$ to mean the substring of $q$ indexed by the complement of $t$.  We use $wt(q)$ to mean the Hamming weight of $q$, namely the number of ones in $q$ and we use $w(q)$ to mean the relative Hamming weight of $q$, namely $w(q) = wt(q)/N$.

If $X$ is a random variable taking discrete outcomes $x_1, \cdots, x_m$, with probability $p_1, \cdots, p_m$, then we write $H(X)$ to mean the Shannon entropy of $X$, defined as $H(X) = -\sum_ip_i\log_2 p_i$.  Note that all logarithms in this paper are base two unless otherwise specified.  If $X$ is a two-outcome random variable, taking outcome $x_1$ with probability $p$ and outcome $x_2$ with probability $1-p$, then $H(X) = h(p)$, where $h(p)$ is the binary Shannon entropy function, namely $h(p) = -p\log p - (1-p)\log(1-p)$.

A quantum state, or density operator $\rho$, is a Hermitian positive semi-definite operator of unit trace, acting on some Hilbert space $\mathcal{H}$.  If $\rho_{AE}$ is a density operator acting on $\mathcal{H}_A\otimes\mathcal{H}_E$, then we write $\rho_E$ to mean the result of tracing out the $A$ system, namely $\rho_E = tr_A\rho_{AE}$.  Similarly for other, or more, systems.  To compress notation, given a pure state $\ket{\psi}$, we write $\kb{\psi}$ to mean $\ket{\psi}\bra{\psi}$.  Also, given an orthonormal basis $\mathcal{B} = \{\ket{b_0}, \cdots, \ket{b_{d-1}}\}$ and a word $q \in \{0, 1, \cdots, d-1\}^N$, we write $\ket{q}^{\mathcal{B}}$ to mean: $\ket{q_1}^{\mathcal{B}}\otimes\ket{q_N}^{\mathcal{B}} = \ket{b_{q_1}}\ket{b_{q_2}}\cdots\ket{b_{q_N}}$.  For example, if given the Hadamard $X$ basis of $X = \{\ket{+}, \ket{-}\}$, then $\ket{100}^X = \ket{-,+,+}$.  \fullversion{Finally, we define the Bell basis states as $\ket{\phi_x^y}$, for $x,y\in\{0,1\}$, as:
\begin{equation}
  \ket{\phi_x^y} = \frac{1}{\sqrt{2}}(\ket{0,x} + (-1)^y \ket{1, 1\oplus x}).
\end{equation}}{}

Let $\rho_{AE}$ be a quantum state.  Then the \emph{conditional quantum min entropy} is defined to be \cite{renner2008security}:
\begin{equation}
  \Hmin(A|E)_\rho = \sup_{\sigma_E}\max\left\{\lambda\in\mathbb{R} \st 2^{-\lambda}I_A\otimes\sigma_E - \rho_{AE} \ge 0\right\},
\end{equation}
where $I_A$ is the identity operator on the $A$ system, and where $X \ge 0$ is used to denote that operator $X$ is positive semi-definite.  The smooth conditional min entropy \cite{renner2008security} is defined to be:
$  \Hmin^\epsilon(A|E)_\rho = \sup_{\sigma_{AE}}\Hmin(A|E)_\sigma,$
where the supremum is over all density operators $\sigma_{AE}$ that are $\epsilon$-close to $\rho_{AE}$ in trace distance, namely, $\trd{\sigma_{AE} - \rho_{AE}} \le \epsilon$.  We use $\trd{X}$ to denote the trace distance of $X$.

Quantum min entropy is a vital resource in quantum cryptography as it directly relates to the amount of uniform secret randomness one may extract from a given quantum state $\rho_{AE}$, where Alice holds the $A$ system and an adversary Eve holds the $E$ system.  In particular, consider such a state, where the $A$ register is classical, consisting of $N$-bits, and the $E$ system is quantum (and possibly correlated with the $A$ system).  One may choose a random two-universal hash function $f:\{0,1\}^N \rightarrow \{0,1\}^\ell$, disclose the choice to Eve, and hash the $A$ system through $f$.  Denote the resulting state by $\sigma_{KE'}$, where $K$ is a classical register of $\ell$-bits, and $E'$ is Eve's original system combined with the choice of hash function.  Then, it was proven in \cite{renner2008security} that:
\begin{equation}\label{eq:keyrate}
  \trd{\sigma_{KE'} - I_K\otimes \sigma_{E'}} \le \sqrt{2^{-(\Hmin(A|E)_\rho - \ell)}} + 2\epsilon,
\end{equation}
where $I_K$ is a completely mixed state of $\ell$-bits.  The above process is known as \emph{privacy amplification} \cite{QKD-survey2}.  Thus, min entropy can be used to determine exactly how many secret bits $\ell$ one can extract. \fullversion{In particular if one wishes the above trace distance to be no larger than $\epsilon_{PA}$, then, one should set $\ell$ to be:
\begin{equation}
  \ell = \Hmin^\epsilon(A|E)_\rho - 2\log\frac{1}{\epsilon_{PA} - 2\epsilon}.
\end{equation}}{}

There are several very useful properties of quantum min entropy that we will use later in our proof of security.  First, given a state of the form $\rho_{AEZ} = \sum_zp(z)\kb{z}\otimes\rho_{AE}^{(z)}$, then:
\begin{equation}\label{eq:mixed-ent}
  \Hmin(A|E)_\rho \ge \Hmin(A|EZ)_\rho \ge \min_z \Hmin(A|E)_{\rho^{(z)}}.
\end{equation}
\fullversion{Thus, min entropy, conditioning on classical side information $Z$, is the ``worst-case'' entropy of each sub-event $\rho_{AE}^{(z)}$.}{}

The following lemma, proven in \cite{bouman2010sampling} (based on a lemma and proof in \cite{renner2008security}), let's us determine a bound on the min entropy of a superposition state after measuring it:
\begin{lemma}\label{lemma:superpos}
  (From \cite{bouman2010sampling}, based on \cite{renner2008security}):  Let $M$ and $N$ be two orthonormal bases of Hilbert space $\mathcal{H}_A$.  Let $\ket{\psi}_{AE} = \sum_{i\in J}\alpha_i\ket{i}^M\otimes\ket{E_i}$ be some pure quantum state.  Define the mixed state $\chi = \sum_{i\in J}\kb{i}^M\otimes\kb{E_i}$.  Then, if a measurement is made in the $N$ basis of either state (producing random variable ``$N$''), it holds that:
$    \Hmin(N|E)_\psi \ge \Hmin(N|E)_\chi - \log_2|J|.$
\end{lemma}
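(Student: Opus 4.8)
The plan is to reduce the lemma to a single operator inequality between the two post-measurement classical--quantum states, and then read the entropy bound off the definition of $\Hmin$ directly. First I would write the $N$-basis measurement explicitly. Let $c_{ni} = \langle n^N | i^M\rangle$ be the overlap coefficients between the two bases, and write $\ket{E_i} = \alpha_i\ket{\hat E_i}$ with $\ket{\hat E_i}$ normalized. Measuring $\ket\psi$ in the $N$ basis and recording the outcome in a classical register (the ``$N$'' of the statement) produces $\rho^\psi_{NE} = \sum_n \kb{n}^N \otimes \kb{f_n}$, where $\ket{f_n} = (\langle n^N|\otimes I_E)\ket\psi = \sum_{i\in J} c_{ni}\ket{E_i}$; measuring $\chi$ likewise produces $\rho^\chi_{NE} = \sum_n \kb{n}^N \otimes E^\chi_n$ with $E^\chi_n = \sum_{i\in J}|c_{ni}|^2\kb{E_i}$. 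These two cq states are exactly the objects whose min entropies the lemma compares.

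The key claim is the operator inequality $\rho^\psi_{NE} \le |J|\cdot\rho^\chi_{NE}$. Since both states are block diagonal in the $N$ register, it suffices to show $\kb{f_n} \le |J|\,E^\chi_n$ for each $n$, and because $\kb{f_n}$ has rank one this is the same as $|\langle v | f_n\rangle|^2 \le |J|\,\langle v | E^\chi_n | v\rangle$ for every $\ket v\in\mathcal{H}_E$. Setting $x_i = \langle v | E_i\rangle$, the left side equals $\bigl|\sum_{i\in J} c_{ni}x_i\bigr|^2$ and the right side equals $|J|\sum_{i\in J}|c_{ni}|^2|x_i|^2$, so the desired inequality is precisely Cauchy--Schwarz applied to the tuples $(1)_{i\in J}$ and $(c_{ni}x_i)_{i\in J}$. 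This step is where, and why, the factor $|J|$ enters: it is the squared norm $\sum_{i\in J}1^2$ of the all-ones tuple.

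To finish, pick $\sigma_E$ and $\lambda_\chi = \Hmin(N|E)_\chi$ with $2^{-\lambda_\chi} I_N\otimes\sigma_E - \rho^\chi_{NE}\ge 0$ (if the supremum defining $\Hmin$ is not attained, take a sequence approaching it). Then, using $|J| = 2^{\log_2|J|}$ together with the key claim,
\begin{equation*}
2^{-(\lambda_\chi - \log_2|J|)}I_N\otimes\sigma_E - \rho^\psi_{NE} = |J|\bigl(2^{-\lambda_\chi}I_N\otimes\sigma_E - \rho^\chi_{NE}\bigr) + \bigl(|J|\rho^\chi_{NE} - \rho^\psi_{NE}\bigr) \ge 0,
\end{equation*}
so $\Hmin(N|E)_\psi \ge \lambda_\chi - \log_2|J| = \Hmin(N|E)_\chi - \log_2|J|$ by the definition of conditional min entropy.

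The main obstacle is the first conceptual move: recognizing that replacing the coherent superposition $\ket\psi$ by its $M$-dephased mixture $\chi$ merely reinstates the incoherent cross terms $\ket{E_i}\bra{E_j}$, and that these cross terms are controlled --- uniformly over the $N$-basis outcome and over every test vector of Eve at once --- at the cost of only the support size $|J|$. Everything after that (the measurement channel, the rank-one reduction, and the bookkeeping with the $\Hmin$ definition) is routine. As a consistency check and as a link to the original argument of \cite{renner2008security,bouman2010sampling}, one can instead purify $\chi$ as $\ket\chi_{AEF} = \sum_{i\in J}\ket{i}^M\otimes\ket{E_i}\otimes\ket{i}^F$ and note that projecting the copy register $F$ onto $\frac{1}{\sqrt{|J|}}\sum_{i\in J}\ket{i}^F$ collapses $\ket\chi$ onto $\ket\psi$ with probability $1/|J|$; since a post-selection of probability $p$ on a register separate from $A$ can cost at most $\log_2(1/p)$ bits of conditional min entropy, this viewpoint again predicts the $\log_2|J|$ loss.
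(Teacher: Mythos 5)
The paper does not prove this lemma at all --- it is imported verbatim from Bouman--Fehr (who in turn adapt it from Renner's thesis), so there is no in-paper proof to compare against. Your argument is a correct, self-contained proof and it follows essentially the same route as the cited sources: the operator inequality $\rho^\psi_{NE}\le |J|\,\rho^\chi_{NE}$ (which, as you note, is just Cauchy--Schwarz against the all-ones vector and is where the $\log_2|J|$ enters), combined with the positive-semidefiniteness characterization defining $\Hmin$; the only cosmetic difference from the original is that you apply the operator inequality after the $N$-basis measurement rather than establishing $\kb{\psi}\le|J|\chi$ on $AE$ first and letting the measurement CPTP map preserve it, which is equivalent.
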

The above lemma states, informally, that so long as $|J|$ is ``small,'' a pure state will behave similarly to a mixed state, in terms of the entropy after measuring in an alternative basis.

We will also need the following lemma, proven in \cite{krawec2022security}:
\begin{lemma}\label{lemma:td-entropy}
  (From \cite{krawec2022security}): Let $\rho$ and $\sigma$ be two quantum states acting on the same Hilbert space such that $\frac{1}{2}\trd{\rho-\sigma}\le \epsilon$.  Let $\mathcal{F}$ be a CPTP map such that:
  \begin{align*}
    \mathcal{F}(\rho) = \sum_xp(x)\kb{x}\otimes\rho_{AE}^{(x)},\text{ and }\mathcal{F}(\sigma) &= \sum_xq(x)\kb{x}\otimes\sigma_{AE}^{(x)}
  \end{align*}
  Then, it holds that:
  \begin{equation}
    Pr\left(\Hmin^{4\epsilon + 2\epsilon^{1/3}}(A|E)_{\rho^{(x)}} \ge \Hmin (A|E)_{\sigma^{(x)}}\right) \ge 1-2\epsilon^{1/3},
  \end{equation}
  where the probability is over the random outcome $X$ in the states after mapping through $\mathcal{F}$.
\end{lemma}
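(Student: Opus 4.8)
The plan is to push the closeness of $\rho$ and $\sigma$ through $\mathcal{F}$, decompose the resulting states along their common classical register, and then read off the conclusion conditional state by conditional state directly from the supremum that defines smooth min entropy. Throughout I write $\trd{\cdot}$ for the (unnormalized) trace norm, as in the excerpt. I may also assume $\epsilon < 1/8$, since for larger $\epsilon$ we have $2\epsilon^{1/3}\ge 1$, making the claimed bound $\ge 1-2\epsilon^{1/3}$ vacuous.

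First I would apply the data processing inequality for trace distance to the map $\mathcal{F}$, obtaining $\tfrac{1}{2}\trd{\mathcal{F}(\rho)-\mathcal{F}(\sigma)}\le\epsilon$. Since $\mathcal{F}(\rho)=\sum_x p(x)\kb{x}\otimes\rho_{AE}^{(x)}$ and $\mathcal{F}(\sigma)=\sum_x q(x)\kb{x}\otimes\sigma_{AE}^{(x)}$ are block-diagonal in the same orthonormal register $\{\ket{x}\}$, the trace norm splits over $x$, giving $\tfrac{1}{2}\sum_x\trd{p(x)\rho_{AE}^{(x)}-q(x)\sigma_{AE}^{(x)}}\le\epsilon$, and tracing out $AE$ additionally yields $\tfrac{1}{2}\sum_x|p(x)-q(x)|\le\epsilon$ for the classical marginals. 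For each $x$ with $p(x)>0$, the triangle inequality together with $\trd{\sigma_{AE}^{(x)}}=1$ gives $p(x)\,\trd{\rho_{AE}^{(x)}-\sigma_{AE}^{(x)}}\le\trd{p(x)\rho_{AE}^{(x)}-q(x)\sigma_{AE}^{(x)}}+|p(x)-q(x)|$; summing over $x$ and inserting the two estimates above produces the key bound $\sum_x p(x)\,\trd{\rho_{AE}^{(x)}-\sigma_{AE}^{(x)}}\le 4\epsilon$, i.e.\ when $X$ is drawn according to $p$ the expected conditional trace distance between $\rho^{(X)}$ and $\sigma^{(X)}$ is at most $4\epsilon$.

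I would then apply Markov's inequality with the threshold $2\epsilon^{2/3}$, which balances the failure probability against the smoothing radius: $Pr_{X\sim p}(\trd{\rho_{AE}^{(X)}-\sigma_{AE}^{(X)}}\ge 2\epsilon^{2/3})\le 4\epsilon/(2\epsilon^{2/3})=2\epsilon^{1/3}$. Hence with $p$-probability at least $1-2\epsilon^{1/3}$ one has $\trd{\rho_{AE}^{(x)}-\sigma_{AE}^{(x)}}< 2\epsilon^{2/3}\le 4\epsilon+2\epsilon^{1/3}$, so $\sigma_{AE}^{(x)}$ is an admissible state in the supremum defining $\Hmin^{4\epsilon+2\epsilon^{1/3}}(A|E)_{\rho^{(x)}}$, and therefore $\Hmin^{4\epsilon+2\epsilon^{1/3}}(A|E)_{\rho^{(x)}}\ge\Hmin(A|E)_{\sigma^{(x)}}$, which is exactly the assertion of the lemma.

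I do not expect a genuine obstacle: every step is an elementary operator inequality, a one-line Markov bound, or a direct appeal to the definitions recalled in the excerpt. The fussiest point is the bookkeeping around degenerate classical indices: those with $p(x)=0$ carry no $p$-weight and can be ignored, while for $p(x)>0$ but $q(x)=0$ the state $\sigma^{(x)}$ is not determined by $\mathcal{F}(\sigma)$ and should be read with the convention $\sigma^{(x)}:=\rho^{(x)}$ (for which the triangle-inequality step is trivial and the conclusion immediate), or else the total $p$-mass of such indices, at most $\sum_x|p(x)-q(x)|\le 2\epsilon$, can simply be folded into the failure event. The $\epsilon^{1/3}$ and $\epsilon^{2/3}$ exponents are just what falls out of optimizing the Markov threshold, and the lemma's tidy $4\epsilon+2\epsilon^{1/3}$ is a convenient over-estimate of the $2\epsilon^{2/3}$ the argument actually delivers.
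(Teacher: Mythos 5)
The paper states this lemma without proof, importing it verbatim from \cite{krawec2022security}, and your argument is correct and is essentially the derivation given there: data processing, block-diagonal splitting of the trace norm over the classical register, the triangle inequality to bound the $p$-expected conditional trace distance by $4\epsilon$, and then Markov's inequality with threshold $2\epsilon^{2/3}$. Your bookkeeping for the degenerate indices is sound, and your observation that the argument actually delivers the smaller smoothing radius $2\epsilon^{2/3}\le 4\epsilon+2\epsilon^{1/3}$ is consistent with the stated (looser) constant.
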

The above lemma essentially allows one to bound the smooth min entropy of one state, based on the min entropy of another, assuming they are ``close enough'' in trace distance.  The bound also applies after, for example, a measurement is performed (which may be modeled as the $\mathcal{F}$ operator).

\subsection{Quantum Sampling}\label{sec:sampling}

Our proof will utilize a quantum sampling framework introduced by Bouman and Fehr in \cite{bouman2010sampling}.  For a more detailed review of this framework, the reader is referred to that original source; however, for completeness, we discuss the relevant information here.

A \emph{classical sampling strategy} over words $q \in \al^N$ (for example $\al = \{0,1\}$) is a triple $\mathcal{S}{} = (P_T, g, r)$: first a distribution $P_T$ over subsets $t \subset \{1, \cdots N\}$;  second a \emph{guess function} $g$, which outputs a real valued number based on $q_t$ for some subset $t$; and, third, a \emph{target function} $r$, which also outputs a real valued number based on $q_{-t}$.  A good sampling strategy should be one that, over the choice of random subsets according to the given distribution, the guess value evaluated on an observed portion of the word $q_t$ should closely match the target value of the unobserved portion.

\fullversion{To put this more concretely, consider the following sampling strategy from \cite{bouman2010sampling} which we denote $\samp{HW}$ which works over words in $\{0,1\}^N$.  First, the sampling strategy chooses a subset $t$ of size $m$ uniformly at random.  Next, the substring $q_t$ is observed and the guess function is simply the relative Hamming weight of $q_t$, namely $g(q_t) = w(q_t)$.  The target function is also the relative Hamming weight, namely $r(q_{-t}) = w(q_{-t})$.  One would expect that, so long as the sample size is large enough, the observed Hamming weight $g(q_t)$ should be close to the Hamming weight of the unobserved portion of the word, $r(q_{-t})$.  We will return to this example strategy later.}{}

Given a particular subset $t$, a sampling strategy induces a set of \emph{good words} which are words in $\al$ for which, assuming subset $t$ is the one that's actually chosen by the strategy, it is guaranteed that the guess and target functions will be $\delta$-close to one-another.  Formally, given a sampling strategy $\samp{}$ and subset $t$, the set of good words it induces is defined to be the set:
\begin{equation}
  \mathcal{G}^t_{\samp{}} = \{q \in \al^N \st |g(q_t) - r(q_{-t})| \le \delta\}
\end{equation}

Given these definitions, one may define the failure probability of a given sampling strategy to be:
$  \epsilon^{cl} = \max_{q\in\al^N}Pr\left(q \not\in \mathcal{G}_{\samp{}}^t\right),$
where the probability is over the subset chosen $t$, according to the sampling strategy's specification.  Note that $\epsilon^{cl}$ depends on $\delta$ also.

\fullversion{Returning to our example strategy $\samp{HW}$.}{As an example, consider a strategy denoted $\samp{HW}$, for words in $\{0,1\}^N$.  $P_T$ is the uniform distribution of sets of size $m \le N/2$, and both the guess and target functions are simply the relative Hamming weight.}  The set of good words this strategy induces is easily seen to be:
\begin{equation}\label{eq:samp-HW-good}
  \mathcal{G}^t_{HW} = \{q \in \{0,1\}^N \st |w(q_t) - w(q_{-t})|\le\delta\}.
\end{equation}
Then, in \cite{bouman2010sampling}, the following lemma was proven:
\begin{lemma}\label{lemma:samp-HW}
  (From \cite{bouman2010sampling}): Given $\samp{HW}$ as defined above, the error probability is found to be:
  \[
  \max_{q\in\{0,1\}^N}Pr\left(q\not\in\mathcal{G}^t_{HW}\right) \le 2\exp\left(-\delta^2\frac{mN}{N+2}\right) := \epsilon_{HW}^{cl}
  \]
\end{lemma}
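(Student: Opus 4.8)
The plan is to fix an arbitrary word $q\in\{0,1\}^N$, bound $Pr(q\notin\mathcal{G}^t_{HW})$ by a quantity that does not depend on $q$, and then take the maximum over all $q$. Write $\omega=w(q)$ for the relative Hamming weight of the fixed word, and note that once the subset $t$ of size $m$ is drawn uniformly at random, the count $m\cdot w(q_t)=\sum_{i\in t}q_i$ is exactly the number of ones seen in a sample of size $m$ drawn without replacement from a population of $N$ bits containing $wt(q)=N\omega$ ones; i.e., $m\cdot w(q_t)$ is hypergeometrically distributed with these parameters. The first step is then to eliminate $w(q_{-t})$ using conservation of total weight: from $m\cdot w(q_t)+(N-m)\cdot w(q_{-t})=N\omega$ one obtains, in one line, $w(q_t)-w(q_{-t})=\tfrac{N}{N-m}(w(q_t)-\omega)$, so the bad event $\{q\notin\mathcal{G}^t_{HW}\}$, i.e., $\{|w(q_t)-w(q_{-t})|>\delta\}$, is identical to the deviation event $\{|w(q_t)-\omega|>\tfrac{N-m}{N}\delta\}$ for the sample mean of a hypergeometric variable about its expectation.

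The second step is a concentration bound for sums in sampling without replacement. The crude Hoeffding estimate $Pr(|w(q_t)-\omega|\ge\tau)\le 2\exp(-2m\tau^2)$ is not sharp enough to reach the stated constant, so I would instead use the version with the finite-population correction (Serfling's inequality), which for a $\{0,1\}$-valued population gives $Pr(|w(q_t)-\omega|\ge\tau)\le 2\exp\!\left(-\tfrac{2mN\tau^2}{N-m+1}\right)$. Substituting $\tau=\tfrac{N-m}{N}\delta$ yields the bound $2\exp\!\left(-\tfrac{2m(N-m)^2}{N(N-m+1)}\delta^2\right)$, and it then remains only to verify the elementary inequality $\tfrac{2(N-m)^2}{N(N-m+1)}\ge\tfrac{N}{N+2}$. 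Writing $u=N-m$, this is $2u^2(N+2)\ge N^2(u+1)$, a quadratic in $u$ with positive leading coefficient that vanishes at $u=N/2$ and is increasing there, hence nonnegative for all $u\ge N/2$, that is, for $m\le N/2$ (with equality at $m=N/2$). Since the resulting bound no longer depends on $\omega$, taking $\max_{q\in\{0,1\}^N}$ leaves it unchanged, which gives $\epsilon^{cl}_{HW}\le 2\exp\!\left(-\delta^2\tfrac{mN}{N+2}\right)$.

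The main difficulty is not any individual step but obtaining the precise constant $\tfrac{mN}{N+2}$. This is exactly what forces the use of the sharpened, finite-population-corrected tail bound rather than plain Hoeffding, and it is also why the lower-order terms must be tracked carefully in the concluding algebraic inequality --- which is the one place where the hypothesis $m\le N/2$ is genuinely used.
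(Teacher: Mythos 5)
The paper does not actually prove this lemma---it imports the bound verbatim from Bouman and Fehr \cite{bouman2010sampling}---so there is no in-paper argument to compare against. Your self-contained derivation is correct and follows the same route as the cited source: the identity $w(q_t)-w(q_{-t})=\tfrac{N}{N-m}\left(w(q_t)-w(q)\right)$ reduces the bad event to a tail bound for the hypergeometric sample mean, Serfling's finite-population inequality gives the exponent $\tfrac{2m(N-m)^2}{N(N-m+1)}\delta^2$, and your algebraic check that $2u^2(N+2)-N^2(u+1)\ge 0$ for $u=N-m\ge N/2$ (with equality at $m=N/2$) correctly recovers the stated constant $\tfrac{mN}{N+2}$.
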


A classical sampling strategy may be promoted to a quantum one in a natural way.  Fix an orthonormal basis of dimension $|\al|$.  We label it here as simply $\{\ket{0}, \cdots, \ket{|\al|-1}\}$, though the basis may be arbitrary.  Then, let $\ket{\psi}_{AE}$ be some quantum state where the $A$ register lives in a space of dimension $|\al|^N$ (i.e., it consists of $N$ systems, each system of dimension $|\al|$).  The $E$ register is arbitrary.  Note this system need not be separable and can, in fact, be arbitrary within this space.  Then, given some classical sampling strategy, the quantum version simply chooses a subset as before, and will measure those qudits, indexed by $t$ in the given basis to produce a classical word $q_t \in \al^{|t|}$.  The question, then, becomes what can we say about the remaining, unmeasured, systems?

Bouman and Fehr's main result is to show that, essentially, the remaining unmeasured portion must collapse to a superposition consisting of words, with respect to the given basis, that are $\delta$-close in target function, to the guess $g(q_t)$.

To define this formally, fix a sampling strategy $\samp{}$ over words in $\al^N$ and let $\mathcal{B}$ be a $|\al|$-dimensional orthonormal basis.  The sampling strategy induces a set of good words $\mathcal{G}^t$.  Consider the following subspace, denoted $\mathcal{G}^t_{\samp{},\mathcal{B}}$:
\begin{equation}
  \mathcal{G}^t_{\samp{},\mathcal{B}} = \text{span}\left\{\ket{q}^\mathcal{B} \st q \in \mathcal{G}^t\right\}\otimes\mathcal{H}_E.
\end{equation}
Then, a quantum state $\ket{\nu^t}$ is said to be an \emph{ideal state}, with respect to the given subset $t$, if $\ket{\nu^t}\in\mathcal{G}_{\samp{},\mathcal{B}}$.  Note that, given $\ket{\nu^t}$, if the sampling strategy actually chooses subset $t$ and measures those qudits indexed by $t$ in basis $\mathcal{B}$ resulting in outcome $q_t\in\al^{|t|}$, it is guaranteed that the unmeasured state must collapse to a superposition of the form:
$  \ket{\nu^t_q} = \sum_{i \in J_q}\ket{i}^\mathcal{B} \ket{E_i^{q,t}},$
where:
\begin{equation}
  J_q = \{i \in \al^{N - |t|} \st |r(i) - g(q_t)| \le \delta\}.
\end{equation}
In general, these ideal states are nice to work with as they are ``well behaved'' after a  measurement is made.  Bouman and Fehr's main result can be summarized in the theorem below:
\begin{theorem}\label{thm:sampling}
  (From results in \cite{bouman2010sampling}): Let $\samp{}$, be a classical sampling strategy over words of length $N$ in some $d$-dimensional alphabet, with error probability $\epsilon^{cl}$.  Then, given a quantum state $\ket{\psi}_{AE}$ where the $A$ register is a $d^N$ dimensional Hilbert space, and any $d$-dimensional orthonormal basis $\mathcal{B}$, there exists a collection of ideal states $\{\ket{\nu^t}\}$, indexed by possible subsets $t$, such that $\ket{\nu^t} \in \mathcal{G}^t_{\samp{}, \mathcal{B}}$ and:
$    \frac{1}{2}\trd{\sum_tP_T(t)\kb{t}\otimes\left(\kb{\psi} - \kb{\nu^t}\right)} \le \sqrt{\epsilon^{cl}}.$
\end{theorem}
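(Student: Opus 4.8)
The plan is to reproduce Bouman and Fehr's projection argument, which builds the ideal states $\ket{\nu^t}$ directly as normalized projections of $\ket{\psi}$ and then controls the average projection weight through the classical error probability. For each subset $t$, let $\Pi_t$ be the orthogonal projector onto $\mathcal{G}^t_{\samp{},\mathcal{B}}$; since this subspace has the form $\mathrm{span}\{\ket{q}^{\mathcal{B}} : q\in\mathcal{G}^t\}\otimes\mathcal{H}_E$, the projector factors as $\Pi_t = \Pi_t^A\otimes I_E$. I would then set $\ket{\nu^t} = \Pi_t\ket{\psi}/\|\Pi_t\ket{\psi}\|$ whenever $\|\Pi_t\ket{\psi}\|>0$, and let $\ket{\nu^t}$ be any fixed unit vector in $\mathcal{G}^t_{\samp{},\mathcal{B}}$ otherwise. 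By construction each $\ket{\nu^t}$ lies in $\mathcal{G}^t_{\samp{},\mathcal{B}}$ and is therefore an ideal state; the ``collapse after measurement'' structure $\ket{\nu^t_q}=\sum_{i\in J_q}\ket{i}^{\mathcal{B}}\ket{E_i^{q,t}}$ described just before the theorem follows immediately from the definition of that subspace.

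Next I would establish the classical-to-quantum bridge. Measuring the entire $A$ register of $\ket{\psi}$ in basis $\mathcal{B}$ induces some distribution $P_\psi$ over words $q\in\al^N$, and by the definition of $\Pi_t$ we have $\bra{\psi}\Pi_t\ket{\psi} = Pr_{q\sim P_\psi}(q\in\mathcal{G}^t)$. The definition of the classical error probability gives $\epsilon^{cl} = \max_{q\in\al^N}Pr_{t\sim P_T}(q\notin\mathcal{G}^t)$, a worst-case bound over individual words, so averaging over the (arbitrary) distribution $P_\psi$ and over $t\sim P_T$ can only improve it:
\[
  \sum_t P_T(t)\,\bra{\psi}\Pi_t\ket{\psi} = Pr_{t\sim P_T,\, q\sim P_\psi}(q\in\mathcal{G}^t) \ge 1-\epsilon^{cl}.
\]

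Finally I would convert this average overlap into the stated trace-distance bound. For two pure states, $\tfrac12\trd{\kb{\psi}-\kb{\nu^t}} = \sqrt{1-|\langle\psi|\nu^t\rangle|^2}$, and since (outside the degenerate zero-norm case) $\ket{\nu^t}$ is the normalized projection of $\ket{\psi}$, $|\langle\psi|\nu^t\rangle|^2 = \|\Pi_t\ket{\psi}\|^2 = \bra{\psi}\Pi_t\ket{\psi}$; in the degenerate case the right-hand side of $\tfrac12\trd{\kb{\psi}-\kb{\nu^t}} \le \sqrt{1-\bra{\psi}\Pi_t\ket{\psi}}$ is simply $1$, which always dominates. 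Because the classical flags $\kb{t}$ are mutually orthogonal, the trace distance of the block-diagonal operator equals the $P_T$-weighted sum of the per-block trace distances, so $\tfrac12\trd{\sum_t P_T(t)\kb{t}\otimes(\kb{\psi}-\kb{\nu^t})} \le \sum_t P_T(t)\sqrt{1-\bra{\psi}\Pi_t\ket{\psi}}$. Applying Jensen's inequality (concavity of the square root) and then the bridge inequality from the previous step bounds this by $\sqrt{1-\sum_t P_T(t)\bra{\psi}\Pi_t\ket{\psi}} \le \sqrt{\epsilon^{cl}}$, which is the claim.

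The genuinely delicate parts I expect are bookkeeping rather than conceptual: confirming that $\Pi_t$ acting on the $A$ register alone suffices, so that $\bra{\psi}\Pi_t\ket{\psi}$ really is the probability of observing a good $A$-word irrespective of $E$; handling the zero-probability case $\|\Pi_t\ket{\psi}\|=0$ cleanly; and justifying the two structural facts about trace distance used above (the pure-state formula and additivity across the orthogonal $\kb{t}$ blocks). The two inequalities that do the actual work — the worst-case-to-average step and Jensen — are each a single line. If a fully self-contained derivation is not desired, one may instead invoke the corresponding intermediate lemmas of \cite{bouman2010sampling} directly.
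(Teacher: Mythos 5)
The paper does not actually prove this theorem---it is imported from Bouman and Fehr \cite{bouman2010sampling}---and your argument is a faithful and correct reconstruction of the proof given there: ideal states as normalized projections onto the good subspaces, the worst-case-to-average bridge $\sum_t P_T(t)\bra{\psi}\Pi_t\ket{\psi}\ge 1-\epsilon^{cl}$, the pure-state fidelity formula for trace distance together with block-diagonal additivity over the orthogonal $\kb{t}$ flags, and Jensen's inequality. There are no gaps; the only implicit (and harmless) assumption is that each $\mathcal{G}^t$ is nonempty so that a fallback unit vector exists in the zero-norm case.
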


Thus, on average over subset choices, the given ``real state'' $\ket{\psi}$ should be close, in trace distance, to these ideal states $\ket{\nu^t}$.  How close they are depends on the analysis of a classical sampling strategy.

\heading{Sampling Strategies}
We now introduce a sampling strategy which we will need in our proof later.  Consider the following sampling strategy which we denote by $\samp{STN}$ involving $p+2$ parties over words $q = (r^0,l^1r^1,l^2r^2,\cdots l^pr^p,l^{p+1}) \in \{0,1\}^N\times\{0,1\}^{2N}\times\cdots\times\{0,1\}^{2N}\times\{0,1\}^N = \Sigma_N$ (we consider $l^ir^i$ to be two sequential $N$-bit strings).  For notation, given a subset $t \subset\{1, \cdots, N\}$, then we write $q[t]$ to mean the following string:
  \begin{equation}
    q[t] := r^0_{t} \oplus \left(l^1_{t} \oplus r^1_{t}\right) \oplus \cdots \oplus \left(l^p_{t}\oplus r^p_{t}\right)\oplus l^{p+1}_{t}.
  \end{equation}

The sampling strategy $\samp{STN}$ then acts as follows:
(1) A subset $t$ is chosen uniformly at random such that  $t \subset\{1, 2, \cdots, N\}$ and $|t| = m < N/2$.
(2) Next, $q[t]$ is observed and the relative Hamming weight is computed.  This is used as a guess for the relative Hamming weight of the unobserved string $q[-t]$, where $-t = \{1, \cdots, N\} \setminus t$.  This implies the set of good words is:
  \begin{equation}\label{eq:good-STN}
    \mathcal{G}^{t}_{STN} := \left\{q \in \Sigma_N \st |w(q[t]) - w(q[-t])| \le \delta\right\}.
  \end{equation}

The above sampling strategy will essentially model the sampling information we will learn in the STN network we analyze later.  \fullversion{The string $r^0$ and $l^{p+1}$ will represent Alice and Bob's information respectively, while each pair $l^ir^i$ will represent the data held by the $i$'th STN.  Since STNs will simply broadcast the parity of their data (i.e., $l^i\oplus r^i$) and not the individual data (not $l^i$ and $r^i$ separately), the sampling strategy only has access to the XOR of these pair-wise strings.}{}

The failure probability of this strategy is analyzed in the following lemma:
\begin{lemma}\label{lemma:samp-stn}
  Let $\delta > 0$, and let $m \le N/2$.  Then, the failure probability of $\samp{STN}$ is upper bounded by:
  \begin{equation}
    \max_{q\in\Sigma_N}Pr\left(q \not\in \mathcal{G}^t\right) \le 2\exp\left(-\delta^2\frac{mN}{N+2}\right).
  \end{equation}
\end{lemma}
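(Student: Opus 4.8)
The plan is to reduce $\samp{STN}$ to the Hamming-weight strategy $\samp{HW}$ of Lemma~\ref{lemma:samp-HW}. The key observation is that, although a word $q \in \Sigma_N$ carries $p+2$ component strings $r^0, l^1r^1, \ldots, l^pr^p, l^{p+1}$, both the guess function and the target function of $\samp{STN}$ depend on $q$ only through the single $N$-bit string obtained by XORing all components position-wise. Concretely, I would define $\hat{q} \in \{0,1\}^N$ by $\hat{q}_i = r^0_i \oplus l^1_i \oplus r^1_i \oplus \cdots \oplus l^p_i \oplus r^p_i \oplus l^{p+1}_i$ for each $i \in \{1,\ldots,N\}$. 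Then for any subset $t \subset \{1,\ldots,N\}$ one checks directly from the definition of $q[t]$ that $q[t] = \hat{q}_t$ and $q[-t] = \hat{q}_{-t}$, so $w(q[t]) = w(\hat q_t)$ and $w(q[-t]) = w(\hat q_{-t})$.

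Next I would note that the subset distribution $P_T$ of $\samp{STN}$ — uniform over subsets of size $m \le N/2$ — is identical to that of $\samp{HW}$. Hence, for a fixed $q$, the event $q \notin \mathcal{G}^t_{STN}$, i.e.\ $|w(q[t]) - w(q[-t])| > \delta$, is exactly the event $|w(\hat q_t) - w(\hat q_{-t})| > \delta$, i.e.\ $\hat q \notin \mathcal{G}^t_{HW}$, and these occur with the same probability over the random choice of $t$. Therefore
\[
  \max_{q \in \Sigma_N} \Pr\left(q \notin \mathcal{G}^t_{STN}\right) = \max_{q \in \Sigma_N} \Pr\left(\hat{q} \notin \mathcal{G}^t_{HW}\right) \le \max_{\hat q \in \{0,1\}^N} \Pr\left(\hat q \notin \mathcal{G}^t_{HW}\right),
\]
where the last inequality holds because $\hat q$ ranges over a subset of $\{0,1\}^N$ as $q$ ranges over $\Sigma_N$ (in fact the map $q \mapsto \hat q$ is surjective, so equality holds). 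Applying Lemma~\ref{lemma:samp-HW} to the right-hand side immediately gives the claimed bound $2\exp\!\left(-\delta^2 \frac{mN}{N+2}\right)$.

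I do not anticipate a substantive obstacle here; the argument is a clean change of variables. The only points requiring care are (i) verifying the identity $q[t] = \hat q_t$ from the definition of $q[t]$ — purely bookkeeping with the XOR — and (ii) making sure the reduction respects the subset distribution, which it does since $\samp{STN}$ and $\samp{HW}$ sample $t$ in exactly the same way and with the same size constraint $m \le N/2$. One could also remark, for the benefit of the later security proof, that this shows the parity broadcasts of the STN chain leak information only about the aggregate string $\hat q$, which is precisely why the one-dimensional Hamming-weight analysis suffices.
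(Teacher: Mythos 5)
Your proposal is correct and follows essentially the same route as the paper's own proof: define the aggregate XOR string (the paper calls it $\widetilde{q}$, you call it $\hat q$), observe that the failure event for $q$ under $\samp{STN}$ coincides with the failure event for the aggregate string under $\samp{HW}$, and invoke Lemma~\ref{lemma:samp-HW}. The only cosmetic difference is that the paper states the reduction as a one-way implication ($q \notin \mathcal{G}^t_{STN}$ implies $\widetilde{q} \notin \mathcal{G}^t_{HW}$), which already suffices for the upper bound, whereas you note the stronger equivalence and surjectivity; both are fine.
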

\begin{proof}
  Fix $q = (r^0, l^1r^1, \cdots, l^pr^p, l^{p+1}) \in \Sigma_N$.  Then, define a new string $\widetilde{q} = \in \{0,1\}^N$  to be $\widetilde{q} = r^0\oplus(l^1\oplus r^1) \oplus (l^2\oplus r^2)\oplus\cdots\oplus (l^p \oplus r^p) \oplus l^{p+1}$.  It is clear that, for any subset $t$, it holds that $q \not\in \mathcal{G}^t_{STN}$ implies that $\widetilde{q}\not\in\mathcal{G}^t_{HW}$, where $\mathcal{G}^t_{HW}$ is defined in Equation \ref{eq:samp-HW-good}.  Since this is true for any subset and since $q$ was arbitrary, the result follows from Lemma \ref{lemma:samp-HW}.
\end{proof}


\section{Simplified Trusted Nodes}
Simplified trusted nodes (STNs), originally introduced in \cite{STN}, act as regular trusted nodes, except they do not need to perform any sampling, error correction, or privacy amplification whenever Alice and Bob want to establish a secret key.  We consider a chain topology where Alice and Bob are connected through $p$ STNs (see Figure \ref{fig:STN-chain}).  We assume that each neighboring pair of nodes has access to a classical authenticated channel; we also assume Alice and Bob have an authenticated channel.  We do not require every possible pair of STN's to share an authenticated channel, however, only adjacent pairs in the chain. Note that, such a channel may be implemented in an information theoretic secure way using a small pre-shared key \cite{QKD-survey2} (which must later be refreshed as we discuss in our Evaluation section).  We comment on these issues more later.

We will analyze the finite-key setting of an STN chain.  Here, Alice and Bob wish to derive a secret key using $N$ rounds of communication.  Let $\stn{1}, \stn{2}, \cdots, \stn{p}$ be the $p$ STNs.  Alice will stream $N$ qubits to $\stn{1}$; these $N$ qubits will be prepared in either the $Z$ or $X$ basis.  Furthermore, the basis choice will be biased so that $X$ basis states are sent with probability $\px \le 1/2$.  $\stn{1}$ will measure the incoming qubits in either the $Z$ or $X$ basis, choosing randomly, though biased so that the $X$ basis is chosen with the same probability $\px$.  This parameter $\px$ may be optimized over by users.  In parallel, each $\stn{i}$ will stream $N$ qubits to $\stn{i+1}$ who will measure them; each party choosing the $Z$ and $X$ bases randomly (and, also, biasing the basis choice).  Finally, the last STN, $\stn{p}$, will stream $N$ qubits similarly to Bob who will measure in a random basis, similar to the STNs.

Following this, neighboring parties send their basis choices to each other and discard any of the $N$ rounds where they did not choose the same basis (both for sending and measuring on a single link).  It is expected that each neighboring party keeps $N(\px^2 + (1-\px)^2)$ of the $N$ rounds.  Of these kept rounds, parties separate their data into $Z$ rounds (where both parties chose to send/measure in the $Z$ basis) and $X$ rounds.  

Consider $\stn{i}$: it holds data shared with $\stn{i-1}$ (or Alice, if $i=1$) and also $\stn{i+1}$ (or Bob, if $i=p$).  Call the data shared with $\stn{i-1}$ the ``left'' data and $\stn{i+1}$ the ``right'' data string (each further divided into $Z$ and $X$ data strings).  Each $\stn{i}$ will send to $\stn{i+1}$ (or Bob if $i=p$), the parity (or the XOR) of that STN's data - namely, $\stn{i}$ will send $L_Z^i\oplus R_Z^i$, where $L_Z^i$ and $R_Z^i$ are the left and right data strings for the $Z$ basis, and will similarly send $L_X^i\oplus R_X^i$.  Of course, it's possible that the bit-sizes of these two strings are not identical - thus the right-most bits of the largest string are simply discarded.  $\stn{i+1}$ will receive this message and pass it along to $\stn{i+2}$ while also repeating the above for this STN's own individual left and right data strings.  Note that all this classical communication is done using the authenticated channels.  The above process repeats for all STN's until Bob finally receives the parity strings from all $p$ STN's.  

These parity strings, sent by the STNs, may all be of different sizes (though they should not differ too much in expected size), so Bob simply takes the minimum of them all, including the size of his own measurement string, and discards the right-most bits from all bit strings.  Let $\minZ$ be the size of the smallest parity string or his own bit string shared with $\stn{p}$ for the $Z$ basis data and $\minX$ be the same, but for the $X$ basis data.   He XOR's all parity strings together with his measurement data.  For the $Z$ measurement data, this will constitute his raw key; for the $X$ measurement data, this will constitute his channel test data.  Bob then sends to Alice the sizes $\minZ$, $\minX$ and also his $X$ basis data string (after XOR'ing with the STN's $X$ basis parity strings) using their authenticated channel, separate from the pair-wise authenticated channels used by the STN chain.  (Though, of course, $\minX$ may be inferred from the actual $X$ basis data string that's sent).

Alice checks the number of errors in the $X$ basis string - ideally, the $X$ basis data that Bob sent to her should match exactly the $X$ basis data she initially sent to $\stn{1}$.  Any non-matching outcome is counted as an error.  Assuming the error rate is low enough (to be determined later), Alice and Bob will next run an error correction and privacy amplification protocol on their $Z$ basis string to distill their final secret key.  Error correction and privacy amplification are standard processes in QKD; for more details, we refer the reader to \cite{QKD-survey2,QKD-survey3}.  Note that only Alice and Bob need to perform error correction and privacy amplification each time they want to establish a key - the STN's are not required for this, and are free to perform QKD again immediately with other users or the same users - they do not need to spend computational time and resources on error correction and privacy amplification each time a pair of users wants to establish a key.  The STNs will need to later refresh their authenticated channel key-pool, however this may be done infrequently and is something we consider later in our Evaluation section.


\section{Security Analysis}

We now compute the key-rate of the STN chain network discussed in the previous section. \fullversion{We will actually analyze an entanglement-based (EB) version, which we denote $\protEB$ where, instead of the prepare and measure based system where Alice and $\stn{1}$ communicate; $\stn{1}$ and $\stn{2}$ communicate, and so on (with the adversary Eve probing each link in arbitrary manners), we instead consider the case where Eve is allowed to prepare all qubits utilized by the network, entangling them arbitrarily with her ancilla, and sending the correct number of qubits to each party respectively.  We will also make additional simplifications to the protocol which can only benefit Eve.  To prove that security of this entanglement based version (which we will formally define below) will imply security of the prepare and measure version, denoted $\protPM$, discussed in the previous section, we will actually derive several intermediate protocols, building towards the final entanglement based one.  Once the entanglement-based protocol is defined, we will show how the min entropy of the system can be computed, giving us an immediate lower-bound on the key-rate of the protocol.}
{We will actually analyze an entanglement-based (EB) version, denoted $\protEB$, where Eve is allowed to prepare all qubits utilized by the network, entangling them arbitrarily with her ancilla, and sending the correct number of qubits to each party respectively.  We will also make additional simplifications to the protocol which can only benefit Eve. To construct this reduction, we will actually derive several intermediate protocols, building towards the final entanglement based one.  Once the entanglement-based protocol is defined, we will show how the min entropy of the system can be computed, giving us an immediate lower-bound on the key-rate of the protocol.
  }

\heading{Reduction to an Entanglement Based Protocol}
We will show how $\protPM$ can be simplified to an entanglement based version where (1) Eve prepares all quantum states and (2) there are no mismatches in basis measurements.  To do so, we will construct three intermediate protocols, denoted $\protEBM{0}$, $\protEBM{1}$ and $\protEBM{2}$.  From the last protocol, we will derive the final entanglement based version, denoted $\protEB$.  For each step, we will show that security of each newly derived protocol implies security of the previous.

For the first step of our reduction: we may replace the steps where a node (a node being Alice, Bob, or an STN) chooses to send one of four qubit states to its right-most neighbor with the following: A node will create a Bell pair $\ket{\phi_0^0} = \frac{1}{\sqrt{2}}(\ket{00} + \ket{11})$, and keep one qubit local, while sending the other qubit to that node's right-most neighbor.  Later, parties will choose either the $Z$ or $X$ basis to measure their respective particles in.  It is not difficult to see that this will be mathematically identical to $\protPM$.  We call this protocol $\protEBM{0}$  Next, we allow Eve to create the initial state, creating a new protocol $\protEBM{1}$:

$ $\newline (1) Let $N$ be the total number of rounds of the network used to establish a secret key (specified by Alice and Bob), and $p$ the total number of STNs in the network chain.
$ $\newline (2)\label{EBM1:state-prep} Eve prepares a quantum state $\ket{\psi}_{AT^1T^2\cdots T^p BE}$, where the $A$ and $B$ registers consist of $N$ qubits each, while each $T^i$ register consists of $2N$ qubits each.  The $A$ and $B$ registers are sent to Alice and Bob respectively, while the $T^i$ register is sent to $\stn{i}$, for $i=1, 2, \cdots, p$.
  \fullversion{
  \begin{itemize}
  \item For notation, we will divide each $2N$ qubit $T^i$ register into two, $N$-qubits registers, $L^i$ and $R^i$; that is, $T^i = L^iR^i$.  The $L^i$ register will simulate the $N$ qubits received from the node to the left of $\stn{i}$ in the $\protEBM{0}$ version of the protocol, while the $R^i$ register will simulate the stored $N$ qubits from the Bell pairs sent to the party to the right.
    \item To further simplify notation, we will also refer to the $A$ register as $R^0$ (i.e., $A = R^0$) and the $B$ register as $L^{p+1}$.  This allows us to talk about ``link $i$'' which consists of registers $R^i$ and $L^{i+1}$, for $i = 0, \cdots, p$.
  \item Ideally, if Eve is ``honest,'' the state she prepares should consist of $N$ independent Bell states on each link, unentangled with Eve's ancilla $E$. 
    Of course, Eve may prepare any state; furthermore we do not assume the state has any iid structure to it (i.e., we prove security against arbitrary, general, attacks).
  \end{itemize}}
{For notation, we divide each $T^i$ into two $N$ qubit registers: $T^i=R^iL^i$; we also denote $A$ as $R^0$ and $B$ as register $L^{p+1}$ (see also Figure \ref{fig:STN-chain}).
  }
$ $\newline (3) \label{EBM1:basis-choice}For every $i$'th link, consisting of $R^iL^{i+1}$, for $i = 0, \cdots, p$, the party to the left (the $i$'th party, with Alice being party $0$) and to the right (the $i+1$'th party, with Bob being party $p+1$), will choose strings $\Theta^i, \Psi^{i} \in \{0,1\}^{N}$ respectively such that each bit of $\Theta^i$ and $\Psi^i$ are chosen independently at random with $Pr\left(\Theta^i_j = 1\right) = Pr\left(\Psi^i_j = 1\right) = \px$ for every $j$.  $\Theta^i$ will represent the measurement basis choice for $R^i$ (with a one in index $j$ implying an $X$ basis measurement of qubit $j$, while a zero indicates a $Z$ basis measurement); $\Psi^i$ represents the same, but for $L^{i+1}$.  Note, no measurements are performed yet.
$ $\newline (4) \label{EBM1:reject} Let $\rej^i\in\{0,1\}^N$ be a string such that $\rej^i_j = 1$ if $\Theta_j^i \ne \Psi_j^i$ (and zero otherwise). This represents the string of rejected qubits (if $\rej^i_j = 1$, then qubit $j$ will be measured in opposite bases and so must be rejected).  Thus, all qubits where $\rej^i_j = 1$ are discarded from both left and right registers on each link (i.e., they are simply traced out).  Each link $i$ now consists of $N^i$ qubits, where $N^i = N - wt(\rej^j)$.
$ $\newline(5) Parties now measure the remaining $N^i$ qubits using the basis indicated in their (now matching) choice strings $\Theta^i$ and $\Psi^i$.  \fullversion{This data is split into $Z$ and $X$ measurement strings.  Let $m^i$ be the total number of $X$ basis measurements on this link and $n^i$ be the total number of $Z$ basis measurements.
$ $\newline(6) Each STN will send the parity (XOR) of their $Z$ and $X$ measurement strings to their right-most neighbor who will ultimately continue to forward the information to Bob as in $\protPM$.
$ $\newline (7) Bob will XOR the received parity strings to his respective $Z$ and $X$ measurement strings.  If (as is likely) these strings are not of equal length, he will take the smallest size and discard anything to the right of the cut off point.  He will then send his $X$ measurement results (XOR'd with the STN's parity strings) to Alice for error checking.  Ideally, her $X$ basis measurement results will match his sent value.  Alice counts the relative number of errors in this $X$ basis string and if this number (the noise) is too high (to be discussed), she aborts.  Otherwise, Alice's $Z$ basis measurement string will be used as her raw key while Bob's $Z$ basis string, XOR'd with the STN's $Z$ basis parity strings, will be used as his raw key.
$ $\newline (8) Alice and Bob run error correction and privacy amplification as normal.}{ The rest of the protocol, then, is identical to the prepare-and-measure version.}

We wish to simplify the above protocol even further.  Notice that the overall raw key size cannot exceed $\minN = \min_i N^i = N - \max_iwt(\rej^i)$ bits, due to the fact that the smallest consistent measurement results (by that, we mean, measurement results resulting from instances where neighboring parties chose the same basis) are a bottleneck of the entire chain.  Other qubits, beyond this range, are discarded in a deterministic manner.  Furthermore, the discarding of rejected systems leaves all parties with a mixed state (even before all nodes measure in their respective basis).  Thus, it would be better for Eve if parties always agreed on the correct basis choice (i.e., there were no mismatches), and, instead, Eve simply prepared a smaller, but pure, state initially.  That is, Eve will prepare a pure state where each $R^i$ and $L^{i+1}$ register holds $\minN$ qubits and each link will choose a subset $\Theta^i$, setting $\Psi^i = \Theta^i$.  Such a system can only give Eve more information than the mixed state that would result in $\protEBM{1}$ above.

\fullversion{Of course, we have the following problem: what should we set the register sizes $\minN$ to be now?}{}  In an actual run of the protocol $\protEBM{1}$, this size depends on random choices of all honest parties (Alice, Bob, and the $p$ STNs). However, importantly, Eve cannot control directly the size of $\minN$ - instead it is independent of her initial state.  Furthermore, since $\minN$ depends only on the largest $wt(\rej^i)$, we may also find a lower-bound on $\minN$ using Hoeffding's inequality, treating $\rej^j$ as a random variable where $Pr(\rej^i_j = 1) = 2\px(1-\px)$.  The expected value of $wt(\rej^i)$ is simply $2N\px(1-\px)$.

Let $\epsilon_{\text{abort}} > 0$ be given, and define $\beta$ to be:
\begin{equation}
  \beta = \sqrt{\frac{\ln\frac{2}{\epsilon_{\text{abort}}}}{2N}}.
\end{equation}
Then, by Hoeffding's inequality, we find:
\begin{equation}
  Pr\left(|N^i - N(1-2\px(1-\px))| \ge \beta N\right) \le \epsilon_{\text{abort}}
\end{equation}
Since the above is true for every link $i$, if we set $\minN = N(1-2\px(1-\px) - \beta)$, it will hold that, except with probability at most $(p+1)\epsilon_{\text{abort}}$, the size of each system, after discarding rejected rounds in $\protEBM{1}$, will be no smaller than $\minN$.  We may, therefore, adjust the above protocol so that parties abort the entire protocol if it ever holds that $N^i < \minN$.  It is also clear that the key-rate will be lowest when each $N^i$ attains this minimum value (any larger value of $N^i$ can only increase the key-rate of the actual protocol).

Given all this, we create a new EB protocol, denoted $\protEBM{2}$.  This protocol is identical to $\protEBM{1}$ except for the following changes:
(1) We change step 2 so that Eve prepares a state $\ket{\psi}_{AT^1\cdots T^p B}$ where, now, each register $A=R^0$, $L^i$, $R^i$, and $B=L^{p+1}$ consists of $\minN$ qubits exactly.
(2) Step 3 is changed so that each link $i$ simply agrees on a subset $\Theta^i$ (since both left and right parties on a link will always agree on the same subset for their measurements now).  However, to ensure the distribution of bases remains the same after ``discarding'' the rejected signals in $\protEBM{1}$, we take $\Theta^i \in \{0,1\}^{\minN}$ and the probability that $\Theta_j^i = 1$ is now $\px^2/(1-2\px(1-\px) - \beta)$.
  (3) Finally, Step 4 is removed since there are no longer any rejected qubits. \fullversion{Instead, Eve is preparing a smaller state simulating the worst case rejection strings.}{}

  There is one more modification we will make to simplify the security analysis.  Consider a particular link $i$ and basis choice $\Theta^i \in \{0,1\}^{\minN}$.  Let $m^i = wt(\Theta^i)$ and $n^i = \minN - m^i$ be the size of the $X$ and $Z$ basis measurement data on link $i$.  Let $\minX = \min_i m^i$ and $\minZ = \min_i n^i$.  Note that any measurement data larger than this value is simply discarded in a deterministic way by discarding any qubits after the cutoff point.  Making the same arguments as before, it is to Eve's benefit if these strings are all of equal size, but the smallest possible value.  We can use Hoeffding's inequality and add an additional abort case as we did when moving from  $\protEBM{1}$ to $\protEBM{2}$ to create a new protocol $\protEB$ (the actual protocol we'll analyze), where each link chooses a random measurement subset ensuring that the number of $X$ basis measurements is exactly $\minX$ in all links.  Of course, we must also ensure that the number of $Z$ basis measurements is $\minZ$ in all links - this can be done by further shrinking the total number of qubits Eve sends to all parties.  In particular, we use Hoeffding's bound to ensure, expect with probability $\epsilon_{\text{abort}}$, that:
  \begin{equation}\label{eq:minX}
    \minX = \minN\left(\frac{\px^2}{1-2\px(1-\px) - \beta} - \beta'\right)
  \end{equation}
and:
  \begin{equation}\label{eq:minZ}
    \minZ = \minN\left(1-\frac{\px^2}{1-2\px(1-\px) - \beta} - \beta'\right).
  \end{equation}
  \fullversion{
  Above:
  \begin{equation}
    \beta' = \sqrt{\frac{\ln\frac{2}{\epsilon_{\text{abort}}}}{2\minN}}.
  \end{equation}}
{Above, $\beta'$ is defined similarly to $\beta$, using $\minN$ instead of $N$.}
  
  Of course, since we are ensuring the number of one's in each $\Theta^i$ to be fixed at $\minX$, this is equivalent, now, to having each link $i$ choose a random subset $\Theta^i \subset \{1, 2, \cdots, \minX + \minZ\}$ of size $|\Theta^i| = \minX$.  This subset will index which qubits to measure in the $X$ basis, while any qubit not indexed by this subset will be measured in the $Z$ basis.  Of course, we also now assume that Eve creates an initial state where each party $L^i$ and $R^i$, now receives:
  \begin{equation}
    \minNN := \minX + \minZ = N(1-2\px(1-\px)-\beta)(1-2\beta')
  \end{equation}
  qubits.  Finally, we can reduce the protocol further by having all parties agree on a single subset.  In practice, each link will have it's own sampling subset $\Theta^i$.  However, having only a single subset chosen (say, Alice choosing a subset and sending it to everyone) can only benefit the adversary as there will be potentially less uncertainty for Eve; it can also easily be shown equivalent to the multi-subset case if all parties randomly permute their data.  Thus, we conclude with one final change to the protocol, namely only a single random subset is chosen of size $\minX$ and all parties measure this subset.

This is the final protocol we will actually analyze.  From our above discussion and analysis it is clear that the key-rate of $\protEB$ will serve as a lower-bound on the key-rate of protocol $\protEBM{0}$ (and, consequently, of the actual protocol $\protPM$).  The total failure probability of $\protPM$ will be, so far, at most $2(p+1)\epsilon_{abort}$.

\heading{Key-Rate Analysis}

  We now derive a bound on the key-rate of $\protEB$ (which will imply a lower bound on the key-rate of $\protPM$).  Our main result is described in the following theorem:

  \begin{theorem}
    Let $\epsilon > 0$ be given.  Let $\ket{\psi}_{R^0T^1T^2\cdots T^p L^{p+1}E}$ be the state Eve creates, where $T^i=L^iR^i$ and $L^i$, and $R^i$ consists of $\minNN$ qubits each.  Assume a subset $\Theta \subset \{1, \cdots, \minNN\}$ is chosen of size $\minX$ uniformly at random.  Each link $i$, consisting of registers $L^iR^{i+1}$, for $i = 0, \cdots, p$, will measure their qubits, indexed by $\Theta$, in the $X$ basis, producing outcomes $r^i$, and $l^{i+1}$.  Each STN broadcasts the parity of their measurement outputs, namely $q^i = l^i\oplus r^{i+1}$, for $i = 1, \cdots, p$.  Let:
    \begin{equation}
      q = r^0\oplus (l^1\oplus r^1) \oplus \cdots \oplus (l^p \oplus r^p) \oplus l^{p+1}.
    \end{equation}
    Ideally, if there is no noise, it should hold that $q$ is the zero string.

    After this, parties measure the remainder of their systems in the $Z$ basis.  Each STN will broadcast the parity of their $Z$ basis measurement results.  Let $P^i$ be the random variable determining $\stn{i}$'s parity broadcast for $Z$ basis states and $P = P^1\cdots P^p$.  Let $A_Z$ be the random variable determining Alice's $Z$ basis measurement of the remaining $R^0$ qubits.

    This entire experiment, conditioning on a particular subset $\Theta$ being chosen, and a particular $X$ basis outcome and broadcast of $\chi = r^0, q^1, \cdots, q^p, l^{p+1}$, can be modeled as a density operator $\rho_{AEP}(\Theta, \chi)$ (tracing out Bob and the STN's).  Then, except with probability at most $2\epsilon^{1/3}$, it holds that:
    \begin{equation}
      \Hmin^{4\epsilon + 2\epsilon^{1/3}}(A_Z|EP)_{\rho(\Theta, \chi)} \ge \minZ\left(1 - h\left(w(q) + \delta\right)\right)
    \end{equation}
    where the probability is over the subset choice and the observed $\chi$, and where:
    \begin{equation}\label{eq:delta}
      \delta = \sqrt{\frac{\minNN+2}{\minX\minNN}\ln\frac{2}{\epsilon^2}}.
    \end{equation}
  \end{theorem}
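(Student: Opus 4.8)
The plan is to run the quantum sampling machinery of Theorem~\ref{thm:sampling} on the state Eve distributes, using the strategy $\samp{STN}$ analyzed in Lemma~\ref{lemma:samp-stn}, pass to an ``ideal'' state, bound its min entropy after the remaining protocol steps by a superposition-counting argument, and finally transport the bound back to the real state with Lemma~\ref{lemma:td-entropy}. Concretely, I would view the joint register $R^0L^1R^1\cdots L^pR^pL^{p+1}$ (a total of $2p+2$ subregisters of $\minNN$ qubits) as a word of length $\minNN$ over the alphabet $\{0,1\}^{2p+2}$, position $j$ carrying the $j$-th qubit of each subregister, and take the basis promoting $\samp{STN}$ to be the $X$-eigenbasis of those $2p+2$ qubits at each position. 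Measuring the subset $\Theta$ in this basis is exactly the links' $X$-measurement, and the guess function of $\samp{STN}$ evaluates only the parity string $q[\Theta]=r^0\oplus(l^1\oplus r^1)\oplus\cdots\oplus l^{p+1}$ on $\Theta$, which is a function of the public broadcast $\chi$ --- it never uses the individual $l^i,r^i$, matching the fact that $\stn{i}$ announces only $q^i=l^i\oplus r^i$. Choosing $\delta$ as in \eqref{eq:delta} makes the classical error probability from Lemma~\ref{lemma:samp-stn} at most $\epsilon^2$, so Theorem~\ref{thm:sampling} supplies ideal states $\{\ket{\nu^\Theta}\}$ within average trace distance $\epsilon$ of $\ket{\psi}$. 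Conditioning on $\Theta$ and on $\chi$ (and tracing the STNs' individual $X$-outcomes, exactly as $\protEB$ does), each ideal state collapses on the $\minZ$ unmeasured positions to a mixture of pure states $\sum_{i\in J}\alpha_i\ket{i}^{X}\ket{E_i}$, where $i$ ranges over tuples in $(\{0,1\}^{2p+2})^{\minZ}$ whose componentwise XOR $\hat{i}$ obeys $|w(\hat{i})-w(q)|\le\delta$; since $J$ and $w(q)$ depend only on $\chi$, by \eqref{eq:mixed-ent} it suffices to bound one such pure term.

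For a single term $\ket{\nu}=\sum_{i\in J}\alpha_i\ket{i}^{X}\ket{E_i}$ the key observation is that the constraint defining $J$ involves only the XOR-fold, so after reparametrizing each STN pair $(l^i,r^i)$ by $(c^i:=l^i\oplus r^i,\;r^i)$ it constrains only $r^0,c^1,\dots,c^p,l^{p+1}$, while each $r^i$ ($i=1,\dots,p$) ranges freely over $\{0,1\}^{\minZ}$. I would realize ``each $\stn{i}$ $Z$-measures $L^i,R^i$ and broadcasts $P^i=a^i\oplus b^i$, discarding $a^i,b^i$'' by conjugating with $W=\prod_{i=1}^{p}\mathrm{CNOT}_{L^i\to R^i}$: after $W$, measuring $R^i$ in the $Z$ basis returns $P^i$, while $L^i$ (which in the $X$-picture now carries $c^i$) and Bob's $L^{p+1}$ are simply traced out. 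Tracing out $L^1,\dots,L^p,L^{p+1}$ is equivalent to an unrecorded $X$-measurement of them with outcomes $\gamma^1,\dots,\gamma^{p+1}$; conditioned on these, the $R^0$-component of the surviving $X$-superposition is supported only on the Hamming ball $\{x\st |w(x\oplus\gamma)-w(q)|\le\delta\}$, $\gamma=\bigoplus_i\gamma^i$, of size at most $2^{\minZ\, h(w(q)+\delta)}$ (the bound being vacuous if $w(q)+\delta>1/2$), while the $R^i$-components remain free. Since a $Z$-measurement of an $X$-eigenstate only introduces a relative phase and does not collapse it, measuring $R^1,\dots,R^p$ in $Z$ (producing $P^1,\dots,P^p$) leaves $R^0$ an $X$-superposition over that same ball. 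Lemma~\ref{lemma:superpos} (with $M$ the $X$-basis, $N$ the $Z$-basis, and $J$ the ball), together with the fact that the associated mixed state yields a uniform $Z$-string, then gives $\Hmin(A_Z|E)\ge\minZ-\log|J|\ge\minZ(1-h(w(q)+\delta))$ for every $(\gamma,P)$; re-averaging over the traced-out $\gamma$ and the Eve-held $P$ via \eqref{eq:mixed-ent} yields $\Hmin(A_Z|EP)\ge\minZ(1-h(w(q)+\delta))$ for the ideal state, for every $(\Theta,\chi)$.

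Finally, the whole pipeline ``choose $\Theta$, measure $\Theta$ in $X$, broadcast $\chi$, apply $W$, trace the $L$-registers and Bob, $Z$-measure the $R$-registers'' is one fixed CPTP map $\mathcal{F}$ outputting classical $(\Theta,\chi)$ together with $(A_Z,E,P)$; applying it to both $\ket{\psi}$ and the ideal state (which are $\epsilon$-close on average) and invoking Lemma~\ref{lemma:td-entropy} with $A=A_Z$ and conditioning system $EP$ upgrades the ideal-state bound to $\Hmin^{4\epsilon+2\epsilon^{1/3}}(A_Z|EP)_{\rho(\Theta,\chi)}\ge\minZ(1-h(w(q)+\delta))$ except with probability $2\epsilon^{1/3}$ over $(\Theta,\chi)$, which is the claim. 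The step I expect to be the main obstacle is the entropy bookkeeping in the previous paragraph: naively bounding $\Hmin(A_ZP^1\cdots P^p|E)$ and chain-ruling back costs $p\minZ$ bits and gives a vacuous bound, and revealing Bob's $Z$-string to Eve is outright fatal because $A_Z=B_Z\oplus P^1\oplus\cdots\oplus P^p$ in the noiseless case. What rescues the argument is that the STNs reveal \emph{only} the parities: commuting their extraction through $W$ lets the unconstrained bit of each STN pair be $Z$-measured harmlessly, while the parity degree of freedom --- the only one actually pinned down by the sampling event --- is traced out and forces Alice's register into the small Hamming ball. Getting this commutation and the associated support count exactly right, and keeping $w(q)$ fixed throughout as the observed relative weight over the $\minX$ sampled rounds, is the crux.
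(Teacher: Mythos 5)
Your proposal is correct and follows essentially the same route as the paper's proof: quantum sampling via $\samp{STN}$ with $\delta$ chosen so the classical failure probability is $\epsilon^2$, an ideal-state analysis in which only the XOR of each STN pair is constrained so Alice's register is confined to a Hamming ball of size $2^{\minZ h(w(q)+\delta)}$, then Lemma \ref{lemma:superpos} with Equation \ref{eq:mixed-ent}, and finally Lemma \ref{lemma:td-entropy} to transfer the bound to the real state. Your conjugation by $\mathrm{CNOT}_{L^i\to R^i}$ is just a unitarily equivalent repackaging of the paper's delayed-measurement construction (double CNOT into a fresh parity ancilla followed by a Bell-basis regrouping of $L^iR^i$); both isolate the broadcast parity from the freely-ranging, discarded degree of freedom in the same way.
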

\begin{proof}
  Let $\ket{\psi}_{R^0T^1T^2\cdots T^p L^{p+1}E}$ be the state Eve creates.  It is not difficult to see that the sampling process used in $\protEB$ is the strategy $\samp{STN}$ discussed in Section \ref{sec:sampling} and analyzed in Lemma \ref{lemma:samp-stn}.  Using Theorem \ref{thm:sampling}, we can construct ideal states $\ket{\nu^\Theta}$ such that:
    $\ket{\nu^\Theta} \in \mathcal{G}_{STN,X}^\Theta,$ 
  where $\mathcal{G}_{STN,X}^\Theta$ is defined in Equation \ref{eq:good-STN} (it is the set of good words induced by $\samp{STN}$ using the $X$ basis in the spanning set definition) and, furthermore:
  \begin{equation}\label{eq:td}
    \frac{1}{2}\trd{\sum_{\Theta}P_T(\Theta)\kb{\Theta}\otimes\left(\kb{\psi} - \kb{\nu^\Theta}\right)} \le \sqrt{\epsilon^{cl}_{STN}} = \epsilon,
  \end{equation}
  where the last equality follows from our choice of $\delta$ and Lemma \ref{lemma:samp-stn}.

  We will analyze the ideal state, defined as $\sum_\Theta P_T(\Theta)\kb{\Theta}\otimes\kb{\nu^\Theta}$,  and compute the min entropy there.  Equation \ref{eq:td} and Lemma \ref{lemma:td-entropy} will allow us to promote the ideal state analysis to  the real state.

  Parties choosing a subset $\Theta$ is equivalent to measuring the subset register and observing a particular $\Theta$.  In the ideal state, this causes the system to collapse to $\ket{\nu^\Theta}$.  An $X$ basis measurement is performed on all qubits indexed by $\Theta$ (in each $L^i$ and $R^i$ register).  Each STN broadcasts the parity of their measurement result.  Let $q = r^0_\Theta \oplus (l^1_\Theta\oplus r^1_\Theta) \oplus \cdots \oplus (l^p_\Theta\oplus r^p_\Theta) \oplus l^{p+1}_\Theta$ be the result of XOR'ing all measurement results.  Since these are ideal states, by Equation \ref{eq:good-STN}, the post-measured state collapses to a state of the form:
  \begin{equation}\label{eq:pm-state}
    \ket{\nu^t_q} = \sum_{(r^0,\cdots, l^{p+1})\in J_q}\ket{r^0,l^1r^1, \cdots, l^pr^p, l^{p+1}}^X\ket{E_{r^0,\cdots,l^{p+1}}^{t,q}}
  \end{equation}
  where:
  \begin{align*}
    &J_q = \left\{(r^0, l^1r^1, \cdots, l^pr^p, l^{p+1})\in\Sigma_{\minZ}\st\right.\\
    &\left|w\left(r^0\oplus(l^1\oplus r^1)\oplus\cdots\oplus (l^p\oplus r^p)\oplus l^{p+1}\right) - w(q)\right| \le \delta\}.
  \end{align*}
  \fullversion{(See, also section \ref{sec:sampling} for more details on the quantum sampling framework we are using here.)}{}

  At this point, parties will measure their remaining qubits in the $Z$ basis, and each STN will broadcast the parity of their $Z$ basis measurement results.  \fullversion{Bob will take these broadcasts and XOR to his $Z$ basis measurement result, yielding his raw key; Alice's raw key is simply her direct measurement result.}{}  We are interested in computing a bound on the quantum min entropy of Alice's measurement result, given Eve's system and all the parity broadcasts.

  Let's consider a single STN: instead of measuring immediately in the $Z$ basis and broadcasting the result, we can equivalently assume each STN will apply a double CNOT to their $L^i$ and $R^i$ registers, XORing their results (in the computational basis) into a ``blank'' ancilla.  Then, the STN will measure this ancilla to produce the parity message.
  
  More specifically, consider $\stn{i}$ and qubit $j$ (out of $\minZ$).  Namely, we are considering the $j$'th qubits in both registers $L^i$ and $R^i$.  Ordinarily, the STN will measure this system in the $Z$ basis, XOR the results classically, and broadcast that bit.  However, instead, we may consider delayed measurements: the STN may equivalently prepare a blank ancilla in a $\ket{0}$ state, apply a CNOT operation using the $j$'th qubit in $L^i$ as the control and the new ancilla as the target, followed by a second CNOT, this time using the $j$'th qubit in $R^i$ as the control and, again, the same ancilla as target.  Thus, it will map $\ket{x,y}_{L^i_jR^i_j}\ket{0}_{P^i_j}$ to $\ket{x,y}_{L^i_jR^i_j}\ket{x\oplus y}_{P^i_j}$, where $x$ and $y$ are single bits (note this definition is with respect to the computational, $Z$ basis).  Measuring the ancilla at this point and then later measuring the $L^i$ and $R^i$ registers in the $Z$ basis, will produce the same system as if $\stn{i}$ had simply measured the $L^i$ and $R^i$ registers in the $Z$ basis and computed the XOR classically.

  Given the action of this unitary operation on $Z$ basis states, namely $\ket{x,y}\ket{0} \mapsto \ket{x,y}\ket{x\oplus y}$, its action on $X$ basis states (which is what Equation \ref{eq:pm-state} is written in), is found to be $\ket{a,b}_{L^i_jR^i_j}^X\ket{0}_{P^i_j}^Z= \frac{1}{2}(\ket{00} + (-1)^a\ket{01} + (-1)^b\ket{10} + (-1)^{a\oplus b}\ket{11})\ket{0}$ which maps to:
  \begin{align*}
                         &\frac{1}{2}(\ket{00} + (-1)^{a\oplus b}\ket{11})\ket{0} + (-1)^a \frac{1}{2}(\ket{01} + (-1)^{a\oplus b}\ket{10})\\
    & = \frac{1}{\sqrt{2}}\ket{\phi_0^{a\oplus b}}_{L^i_jR^i_j} \ket{0}_{P^i_j} + \frac{(-1)^a}{\sqrt{2}}\ket{\phi_1^{a\oplus b}}_{L^i_jR^i_j}\ket{1}_{P^i_j},
  \end{align*}
  where $\ket{\phi_x^y} = \frac{1}{\sqrt{2}}(\ket{0,x} + (-1)^y\ket{1, 1\oplus x})$.  Above, we are denoting this new register as $P^i$ since it will store $\stn{i}$'s parity broadcast.

  Of course, the above map is applied to all $\minZ$ qubits; the action on such a basis state is easily seen to be:
  \begin{equation}
    \ket{l^i, r^i}_{L^iR^i}\ket{0}_{P^i} \mapsto \sum_{c^i\in\{0,1\}^\minZ}\frac{(-1)^{c^i\cdot l^i}}{\sqrt{2^\minZ}}\ket{c^i}_{P^i} \ket{\phi_{c^i}^{l^i\oplus r^i}}_{L^iR^i},
  \end{equation}
  where, above, we permuted the $P^i$ and $L^iR^i$ registers only for clarity in our subsequent presentation and where $c^i\cdot l^i$ is the bit-wise modulo two dot product, namely $c^i\cdot l^i = c^i_1l^i_1\oplus\cdots\oplus c^i_\minZ l^i_\minZ$.  Furthermore, by $\ket{\phi_{c^i}^{l^i\oplus r^i}}_{L^iR^i}$, we mean $\ket{\phi_{c^i_1}^{l^i_1\oplus r^i_1}} \otimes\ket{\phi_{c^i_2}^{l^i_2\oplus r^i_2}}\otimes\cdots$

  All STNs apply this delayed measurement map; due to linearity, the joint system $\ket{\nu_q^t}$ (Equation \ref{eq:pm-state}) evolves to a state we denote $\ket{\zeta_q^t}$ which is found to be:
  \begin{align}
    &\ket{\zeta_q^t} = \frac{1}{\sqrt{2^{\minZ\cdot p}}}\sum_{c^1,\cdots, c^p\in\{0,1\}^\minZ}\ket{c^1\cdots c^p}_P\notag\\
                                                       &\otimes\sum_{(r^0,\cdots, l^{p+1})\in J_q}(-1)^{c\cdot l}\ket{r^0}^X\ket{\phi_{m^1}^{l^1\oplus r^1}}\cdots\ket{\phi_{c^p}^{l^p\oplus r^p}}\ket{l^{p+1}}^X\notag\\
    &\otimes\ket{E^{t,q}_{r^0,\cdots, l^{p+1}}},
  \end{align}
  where $c\cdot l = c^1\cdot l^1 + \cdots + c^p\cdot l^p$.
  
  At this point, the STN's will measure their respective $P$ registers and broadcast the message result (the message being the parity of their measurements or, in this case, the parity of what their measurements will eventually be since we are working with a delayed measurement setup now).  This cause the state to collapse to the mixed state $\sum_c\kb{c}\otimes\kb{\zeta^t_{q,c}}$, where $\zeta^t_{q,c} = \sum_{(r^0,\cdots, l^{p+1})\in J_q}(-1)^{c\cdot l}\ket{r^0}^X \ket{\phi_{c^1}^{l^1\oplus r^1}}\cdots\ket{\phi_{c^p}^{l^p\oplus r^p}}\ket{l^{p+1}}^X$ $\otimes\ket{E^{t,q}_{r^0,\cdots, l^{p+1}}}$
where the sum over $c$ is actually over $c=(c^1, \cdots, c^p)$, where each $c^i \in \{0,1\}^\minZ$.  Note we are disregarding the normalization term which may be absorbed into Eve's vectors.

  Let's consider a particular parity broadcast $c$ and the post measured state $\ket{\zeta^t_{q,c}}$ defined in the equation above.  We may re-write these states in the following form $\ket{\zeta^t_{q,c}} \cong$
  \begin{align}
    &\sum_{l^1,r^1,\cdots, l^{p+1} \in \{0,1\}^\minZ} (-1)^{c\cdot l}\ket{\phi_{c^1}^{l^1\oplus r^1}}\cdots \ket{\phi_{c^p}^{l^p\oplus r^p}}\ket{l^{p+1}}^X\notag\\
    &\otimes\sum_{r^0\in J_{q}(l^1\oplus r^1,\cdots l^p\oplus r^p, l^{p+1})}\ket{r^0}^X\ket{E^{t,q}_{r^0,\cdots, l^{p+1}}}.\notag\\
                        &=\sum_{x^1, x^2, \cdots, x^p, l^{p+1}\in\{0,1\}^{\minZ}}\ket{\phi_{c^1}^{x^1}}\cdots\ket{\phi_{c^p}^{x^p}}\ket{l^{p+1}}^X\notag\\
    &\otimes\sum_{r^0 \in J_q(x^1, \cdots, x^p,l^{p+1})}\ket{r^0}^X\ket{F^{t,q}(c, r^0, x^1,\cdots, x^p, l^{p+1})},
  \end{align}
  where $J_q(x^1,\cdots, x^p, l^{p+1}) =$
  \begin{equation}
     \{r^0\in\{0,1\}^\minZ \st |w(r^0\oplus x^1\oplus\cdots\oplus x^p\oplus l^{p+1}) - w(q)|\le \delta\}
  \end{equation}
  and $\ket{F^{t,q}(c, r^0, x^1,\cdots, x^p, l^{p+1})} =$
  \begin{equation}
     \sum_{\substack{l^1, r^1\in\{0,1\}^\minZ\\ \st l^1\oplus r^1 = x^1}}\cdots \sum_{\substack{l^p, r^p\in\{0,1\}^\minZ\\ \st l^p\oplus r^p = x^p}} (-1)^{c\cdot l}\ket{E^{t,q}_{r^0, l^1, \cdots, l^{p+1}}}.
  \end{equation}
  
  Now, returning to the general mixed state $\sum_c\kb{c}\otimes\kb{\zeta^t_{q,c}}$, each STN will measure their $L^i$ and $R^i$ systems in the $Z$ basis and Bob will measure his register (the $L^{p+1}$ register) in the $Z$ basis.  Since we care only about Alice's system at this point, we will then discard the system.  Of course, this is mathematically equivalent to simply tracing out these systems from $\ket{\zeta^t_{q,c}}$ immediately.  Doing so leads the mixed state:
  \begin{equation}
    \sum_c\kb{c}\otimes\sum_{x^1,\cdots, l^{p+1}}P\left(\smashoperator[r]{\sum_{r^0 \in J_q(x^1, \cdots, x^p,l^{p+1})}}\ket{r^0}^X\ket{F^{t,q}(c, r^0, x^1,\cdots)}\right)
  \end{equation}
where $P(\ket{z}) = \kb{z}$
  At this point a measurement of Alice's register ($R^0$) is made. Equation \ref{eq:mixed-ent}, along with Lemma \ref{lemma:superpos}, can be used to show:
  \begin{equation}
    \Hmin(A|EP) \ge \min_{c, x^1,\cdots, x^p,l^{p+1}}(\minZ - \log_2|J_q(x^1,\cdots, x^p, l^{p+1})|).
  \end{equation}
  It is not difficult to show that:
  \begin{align*}
    &|J_q(x^1, \cdots, x^p, l^{p+1})|\\
    &\le |\{i\in\{0,1\}^\minZ \st w(i) \le w(q) + \delta\}|\le 2^{\minZ h(w(q) + \delta)},
  \end{align*}
  where the last inequality follows from the well-known bound on the volume of a Hamming ball.

   This completes the analysis of the ideal state.  Thanks to Equation \ref{eq:td}, this ideal state is $\epsilon$-close to the real one; Lemma \ref{lemma:td-entropy}, then allows us to complete the proof (taking the random variable $X$ in that lemma to be the subset chosen and the observed $q$).
 \end{proof}

 The above gives us a bound, with high probability, on the quantum min entropy of Alice's raw key conditioned on Eve's side information, and also conditioning on a particular run of the protocol (i.e., conditioning on an actual $X$ basis observation being made).  Using Equation \ref{eq:keyrate}, this leads us directly to a key-rate expression for an STN chain.  In particular, let $\epsilon_{PA} = 9\epsilon+4\epsilon^{1/2}$, then except with probability at most $\epsilon_{fail} = 2\epsilon^{1/3} + 2(p+1)\epsilon$ (where the last term is due to the abort conditions in the event subsets are too small as discussed earlier in our reductions), the final secret key size will be:
 \begin{equation}\label{eq:key-rate-STN}
   \ell_{STN} = \minZ\left(1 - h(w(Q) + \delta)\right) - \leakEC - 2\log\frac{1}{\epsilon}
   \end{equation}
   where $\leakEC$ is the error correction leakage and $\minZ$ and $\minX$ can be found on Equations \ref{eq:minZ} and \ref{eq:minX}.


\section{Evaluations}

Now that we have a finite-key bound for the STN chain, we can evaluate.  While our key-rate proof applies to any noise scenario, will evaluate assuming each link in the chain is a depolarization channel with parameter $Q$.  In this case, the $Z$ or $X$ basis noise in each individual link is simply $Q$ (which we call the \emph{link-level noise}).  Of course, an STN network cannot determine the link-level noise, since no sampling is done at the link level.  Instead, we need to determine the expected value of $w(q)$, where $q$ is the ``additive'' error in each link.  Namely, we need to determine the probability of an error between Alice and Bob after each STN transmit their parity bits.

\fullversion{
  It is not difficult to see in a chain with $p$ STN's (thus $p+1$ total links), an error can only occur if there are an odd number of errors in the total chain.  For instance, in a chain with three links, if there is an error in one link but not two, there will be an error in the entire chain.  However, if there is an error in two of the links, those errors will ``cancel out'' when the parity measurements are transmitted and XOR'd together.  Thus, it is not difficult to see that the expected value of $w(q)$ is simply:
\begin{equation}\label{eq:total-noise}
  w(q) = \sum_{i=0}^{\left\lceil\frac{p+1}{2}\right\rceil - 1}{p+1 \choose 2i+1}Q^{2i+1}(1-Q)^{p-2i}
\end{equation}
}
{
    It is not difficult to see in a chain with $p$ STN's (thus $p+1$ total links), an error can only occur if there are an odd number of errors in the total chain. Thus, the expected value of $w(q)$ is simply:
\begin{equation}\label{eq:total-noise}
  w(q) = \sum_{i=0}^{\left\lceil\frac{p+1}{2}\right\rceil - 1}{p+1 \choose 2i+1}Q^{2i+1}(1-Q)^{p-2i}
\end{equation}

  }
This allows us to evaluate our key-rate equation as derived in Equation \ref{eq:key-rate-STN}.  Key-rates are compared with a chain using regular trusted nodes (denoted simply ``TN'' where, recall, such trusted nodes perform a full QKD stack of sampling, error correction, and privacy amplification).  For a TN chain with $p$ regular TNs, we simply use the standard BB84 finite key rate equation from \cite{tomamichel2012tight}, namely:
\begin{equation}\label{eq:rate-TN}
  \ell_{BB84} = \ell_{TN} = \minZ(1 - h(Q + \mu)) - \leakEC - 2\log\frac{2}{\epsilon'}
\end{equation}
where, note, above the entropy depends on the link level noise $Q$ and not the total noise $w(q)$.  Above, we have:
 $ \mu = \sqrt{\frac{\minZ + \minX}{\minZ\minX}\frac{\minX+1}{\minX}\ln\frac{2}{\epsilon'}}.$
For our evaluations, we set $\epsilon = 10^{-30}$, $\epsilon_{abort} = 10^{-10}$, and $\epsilon' = 10^{-10}$.  This provides an error and failure probability on the order of $10^{-10}$ for both our STN result and the above TN result.  We set $\leakEC = h(w(q)+\delta)$ for the STN case, and $\leakEC = h(Q + \mu)$ for the TN case.

\begin{figure}
    \centering
    \includegraphics[width=.48\linewidth]{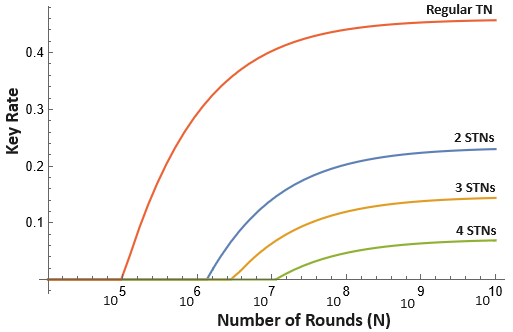}
    \includegraphics[width=.48\linewidth]{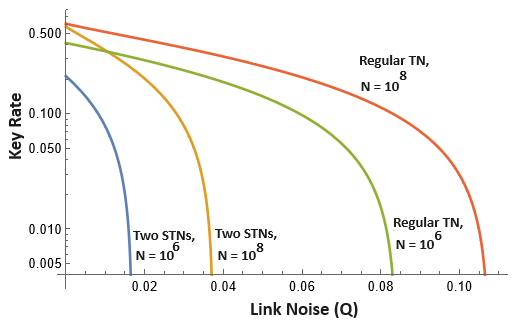}
    \caption{Left: Comparing the finite key-rates of an STN chain (bottom three: blue, yellow, and green), with a regular TN chain (top: red) as the total number of signals, $N$, increases.  Here, we set the link level noise to be $Q = 2\%$ and $\px = 0.2$ for all tests.  Note that as the number of STNs in the chain increases while the link-level noise remains constant, the total key-rate degrades.  This is known to happen asymptotically as shown in \cite{STN}.  Regular TN networks are limited only by the link level noise and so the number of trusted nodes is irrelevant in this case.  Right: Comparing a regular TN chain with an STN chain consisting of two STNs as the link level noise $Q$ increases.  Here we compare $N=10^6$ and $N=10^8$ total signals.  We set $\px = .2$ as before.}
    \label{fig:STNvsTN}
  \end{figure}

  \begin{figure}
    \centering
    \includegraphics[width=.48\linewidth]{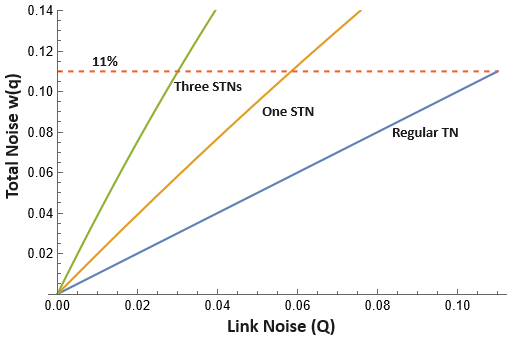}
    \includegraphics[width=.48\linewidth]{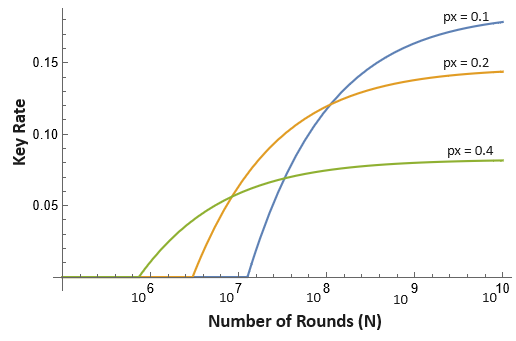}
    \caption{Left: Showing how the total noise (Equation \ref{eq:total-noise}) increases as the link noise ($Q$) increases.  For a regular TN, the total noise depends only on a single link's noise level; as the number of STN's increases, the total noise increases drastically.  Once the total noise surpasses $11\%$, it is impossible for a key to be distilled given our key-rate expression (or the asymptotic rate from \cite{STN}).
    Right: Evaluating the finite key-rates of an STN chain with three STNs for a fixed link level noise of $Q = 2\%$ but varying $\px$.}
    \label{fig:total-noise-and-px}
\end{figure}

Figure \ref{fig:STNvsTN} shows a comparison in key-rates between an STN chain and a TN chain.  Note that the noise tolerance of an STN network is significantly lower than a regular trusted node network.  However, looking at Equation \ref{eq:total-noise}, this is not surprising; indeed as the link-level noise increases, the total noise between Alice and Bob in an STN chain may increase dramatically, as shown in Figure \ref{fig:total-noise-and-px} (Left).

Furthermore, this decrease in key-rate as the number of STNs increases is not unique to our proof and was discovered, at least in the asymptotic case, in \cite{STN}.  Of course, the finite key results cannot be better than asymptotic results.  Note that we are the first to derive a finite key security proof for an STN chain, so we cannot compare the finite key results to other work in STN chains.

Of course, in finite key settings, multiple parameters affect performance.  In addition to the total number of signals sent, the value of $\px$ will also greatly affect key-rates.  This is shown in Figure \ref{fig:total-noise-and-px} (Right).  Note that for small values of $\px$, higher key-rates are possible for larger $N$, however for larger values of $\px$, the overall key-rate will be lower, but one will attain a positive key-rate for smaller $N$.


\fullversion{\subsection{Cost Comparison}}{\textbf{Cost Comparison: }}
Despite the fact that STN chains provide lower noise tolerances, there are still potential benefits to using STN networks if the noise is ``low enough.''  In particular, since each STN does not need to run EC and PA every time a key is derived for Alice and Bob, there may be cost savings in running an STN network.  To formally argue this, we derive a novel cost function for a QKD chain consisting of STNs or TNs.  Our cost function will take into account the cost of running EC and PA; to be fair, it must also take into account the fact that an STN chain, though not always required to perform such operations, will occasionally need to do so, to replenish their secret key pools for the authenticated channel.

Let's consider the cost of running a TN first.  Alice and Bob wish to use the TN chain to establish a shared secret key.  To do, so $N$ qubits are transmitted pair-wise, leading to a secret key of size $\ell_{TN} = \ell_{TN}(N, Q)$, where $\ell_{TN}$ is from Equation \ref{eq:rate-TN} and we use $\ell_{TN}(N,Q)$ to show it's dependence on $N$ and the link noise $Q$ (the additional $\epsilon$ factors do not contribute significantly for large $N$ and so we do not explicitly write them out, though they do appear in our evaluation of $\ell_{TN}$ of course).  To produce this key, Alice and Bob both run  EC and  PA.  Furthermore, to produce this key, each pair of TN's must run EC and PA \emph{twice} (one with their neighbor to the left and one with their neighbor to the right).  We will use $EC(N,Q)$ to be the cost of running these EC and PA processes when the total number of signals sent was $N$ and with a noise in the raw key of $Q$.  We will assume that some of this key is used to replenish each TN's pre-shared key for authentication and so they do not need to do any further computation beyond this.  In this case, the cost of running a TN chain is:
\begin{equation}\label{eq:cost-TN}
  \mathcal{C}_{TN} = \frac{\text{cost}}{\text{secret key bits}} = \frac{(2p+2)EC(N,q)}{\ell_{TN}(N,Q)}
\end{equation}

For the STN the case is more involved.  When Alice and Bob want to establish a secret key, they will send $N$ qubits through the chain.  Then, only Alice and Bob will run EC and PA, leading to a secret key size of $\ell_{STN} = \ell_{STN}(N,w(q), p)$, where $\ell_{STN}$ is from Equation \ref{eq:key-rate-STN} (note the additional dependence on $p$).  The STN's do not need to perform EC and PA for this key; however they did use up some of their shared secret key pool for their authenticated classical communication (see Figure \ref{fig:STN-chain}).  This key pool cannot be refreshed immediately as it could with the TN case, since the STNs did not perform a full QKD operation (they did not perform EC and PA).  This key-pool will need to be refreshed sometime.

Let's assume that each STN starts with $k$ secret key bits for their authenticated communication.  Let's also assume that for Alice and Bob to establish a secret key using $N$ rounds of the STN chain, this will require $c(N)$-bits to be used from the secret key pool of each STN \fullversion{(this number does not depend on the noise of the channel, since the communication cost depends only on the number of rounds used, $N$)}{}.  After Alice and Bob use the STN network $J$ times (each time establishing a secret key of size $\ell_{STN}(N, w(q), p)$), each STN has a secret key pool of size $k - Jc(N)$.  Once this is ``low enough'', each STN must, independently, run pairwise QKD with their neighbors, sending $N$ rounds of qubits, and performing EC and PA with each neighbor.  After this, each STN will now have an additional $\ell_{BB84}(N, Q)$ key bits in their secret key pools for authentication.  We will assume that the STN's will perform this pair-wise QKD whenever they have $c(N)$ bits remaining in their secret key pools and, so, they must do this after the $J = (k-c(N))/c(N)$'th round.

Summarizing, the STN's do not need to perform any EC or PA for $J$ key establishments of the network.  During these $J$ rounds, Alice and Bob have established $J\times\ell_{STN}(N, w(q), p)$ secret key bits; of course these users must be performing EC and PA for each of their $J$ secret keys.  Finally, only after the $J$'th key is established do the STN's need to perform their own QKD establishment with their adjacent neighbors.  This will provide them with additional key bits for their pool based on the link-level noise.  Note that Alice and Bob must also do this to refresh their shared keys with their neighboring STN's.  This leads to a final cost function of:
\begin{align}\label{eq:cost-STN}
  \mathcal{C}_{STN} & = \frac{2J \times EC(N, w(q)) + (2p+2)EC(N, Q)}{J\times \ell_{STN}(N, w(q), p)}
\end{align}
If we assume $k = \ell_{BB84}(N,Q) = \ell_{TN}(N, Q)$, then
\fullversion{
  \begin{equation}
    J = \frac{\ell_{BB84}(N,Q) - c(N)}{c(N)}
  \end{equation}
}
{
  $J = (\ell_{BB84}(N,Q) - c(N))/c(N)$
  }

To evaluate and compare, we set $c(N) = \log_2N$, since information theoretic authentication generally requires a logarithmic number of secret keys \cite{wegman1981new}.  We also set $EC(N, Q) = N$, that is, we will simply assume the cost of EC and PA are linear in the number of rounds. Of course other scenarios may be evaluated.  Note that the ``cost,'' as we evaluate it, is a unit-less function in our case: it may be related to running time, memory usage, etc.  Users of a STN/TN network should modify this to suite their needs.  Our results are shown in \fullversion{Figures \ref{fig:cost-sig} and \ref{fig:cost-noise}.}{Figure \ref{fig:cost-sig}.}  It is clear from these figures that STN chains may be much more cost effective in low-noise scenarios; however, in high noise scenarios (i.e., high link-level noise), regular TNs may be more cost effective.

\fullversion{
\begin{figure}
    \centering
    \includegraphics[width=.48\linewidth]{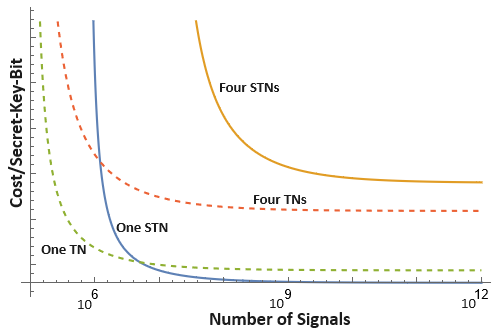}
    \includegraphics[width=.48\linewidth]{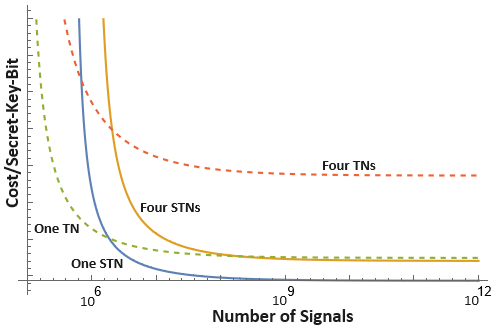}
    \caption{Comparing the cost per secret key bit of STN chains (Solid Lines, Equation \ref{eq:cost-STN}) and regular TN chains (Dashed Lines, Equation \ref{eq:cost-TN}) as the number of signals per key establishment round ($N$) increases.  Left: Link level noise $Q = 2\%$; Right: Link level noise $Q = 1\%$.  In both, we have $\px = 0.2$.  Note that STNs are more cost effective, according to our cost function above, for lower levels of noise than the comparably sized TN chain.}
    \label{fig:cost-sig}
  \end{figure}

  \begin{figure}
    \centering
    \includegraphics[width=.48\linewidth]{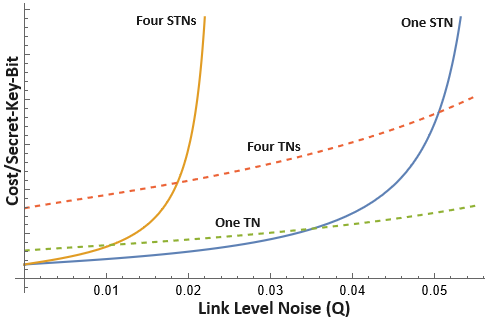}
    \caption{Comparing the cost per secret key bit of STN chains (Solid Lines) and regular TN chains (Dashed Lines) as the link level noise $Q$ increases.  Here, we set the number of signals for each key establishment round to be $N = 10^{10}$ and $\px = 0.2$.  These figures again demonstrate that STNs may be more cost effective, for lower levels of noise, than the comparably sized TN chain.}
    \label{fig:cost-noise}
  \end{figure}
}
{
  \begin{figure}
    \centering
    \includegraphics[width=.48\linewidth]{cost-vs-signal-02.png}
    \includegraphics[width=.48\linewidth]{cost-vs-noise.png}
    \caption{Left: Comparing the cost per secret key bit of STN chains (Solid Lines, Equation \ref{eq:cost-STN}) and regular TN chains (Dashed Lines, Equation \ref{eq:cost-TN}) as the number of signals per key establishment round ($N$) increases.  Link level noise $Q = 2\%$ and $\px = 0.2$. Right: Same comparison, but as link-level noise increases.  Here, we set $N = 10^{10}$.}
    \label{fig:cost-sig}
  \end{figure}

  }






\section{Closing Remarks}

In this paper, we derived a new proof of security for an STN chain in the finite key setting.  To our knowledge, this is the first time a finite-key security proof has been achieved for an STN chain.  \fullversion{Our proof methods may have broad application to other QKD networking scenarios.}{}  We also evaluate the STN network performance in a variety of scenarios and compare with a regular TN network.  Finally, we derive a new cost function to more effectively compare STN and TN networks.

In general, STNs have lower noise tolerances, however they may be more cost effective in some scenarios.  Since STNs do not need to perform error correction and privacy amplification every time end-users want to establish a secret key, they can be equipped with slower computational hardware.  \fullversion{Our cost function demonstrates that for low levels of noise STNs can be much more cost effective in the long run, when compared to regular TN networks.  There may also be security benefits to STN chains as explained in \cite{huang2022stream}.}{}  Overall, our work in deriving a new finite-key proof of security for STNs can be beneficial to further research into developing a cost-effective QKD network.

Many interesting future problems remain.  Dealing with channel loss and imperfect sources would be interesting.  We suspect our proof methods can be suitably adapted to handle this case, perhaps combined with decoy state methods \cite{hwang2003quantum,lo2005decoy,wang2005beating}, though a full proof we leave as future work.


\balance


\end{document}